\newcommand{\PP}{\mathsf{P}}
\newcommand{\ee}{{\rm e}}
\newcommand{\ii}{{\rm i}}
\newcommand{\half}{\mbox{$\frac{1}{2}$}}
\newcommand{\R}{{\mathbb R}}
\newcommand{\C}{{\mathbb C}}
\newcommand{\Z}{{\mathbb Z}}
\newcommand{\cP}{{\mathcal  P}}
\newcommand{\cK}{{\mathcal  K}}
\newcommand{\cN}{{\mathcal  N}}
\newtheorem{theorem}{Theorem}[section]
\newtheorem{lemma}[theorem]{Lemma}
\newtheorem{proposition}[theorem]{Proposition}
\theoremstyle{remark}
\newtheorem{remark}[theorem]{Remark}
\title{\Large{\bf{Exact solutions by integrals of the non-stationary elliptic Calogero-Sutherland equation}}}
\date{\vspace{-0.5cm}\small\today\vspace{-0.5cm}}
\author[1,*]{Farrokh Atai}
\affil{Department of Mathematics\\
Kobe University\\
Rokko, Kobe 657--8501, Japan}
\author[2,\dag]{Edwin Langmann}
\affil{Department of Physics\\
KTH Royal Institute of Technology\\
SE-106 91 Stockholm, Sweden}\vspace{2mm}
\date{\vspace{-1.0cm}\small \today}
\begin{document}
\maketitle

\let\oldthefootnote\thefootnote
\renewcommand{\thefootnote}{\fnsymbol{footnote}}
\footnotetext[1]{Electronic address: {\tt farrokh@math.kobe-u.ac.jp}}
\footnotetext[2]{Electronic address: {\tt langmann@kth.se}}
\let\thefootnote\oldthefootnote

\begin{abstract}
We use generalized kernel functions to construct explicit solutions by integrals of the non-stationary Schr\"odinger equation for the Hamiltonian 
of the elliptic Calogero-Sutherland model (also known as  elliptic Knizhnik-Zamolodchikov-Bernard equation). 
Our solutions provide integral represenations of elliptic generalizations of the Jack polyomials. 
\medskip

\noindent MSC class: 33E30, 32A26, 81Q05, 16R60

\medskip

\noindent {{\it Keywords:} Elliptic integrable systems; quantum Calogero-Moser-Sutherland systems; 
elliptic Knizhnik-Zamolodchikov-Bernard equation; 
kernel functions; affine analogue of Jack polynomials; 
integral representations of special function.}
\end{abstract}

\section{Introduction}
\label{sec:Intro}
In this paper we construct explicit solutions of the partial differential equation
\begin{equation} 
\label{Eq:Main}
\left(  \kappa \frac{\ii}{\pi} \frac{\partial}{\partial\tau}  -\sum_{j=1}^{n}\frac{\partial^2}{\partial x_j^2} + \sum_{j<k}^{n} 2g(g-1)\wp(x_j- x_k |\pi,\pi\tau)\right) \psi(x_1,\ldots,x_n;\tau) = 0
\end{equation} 
where $\wp(x|\pi,\pi\tau)$ is the usual Weierstrass function with periods $(2\pi,2\pi\tau)$, $x_j\in[-\pi,\pi]$ are variables on a circle ($j=1,\ldots,n$), 
$g>1/2$, and $\kappa = kg$ with integer $k\geq 1$ (the definition of the function $\wp$ can be found in Appendix~\ref{app:elliptic}; 
our notation otherwise is explained at the end of this section). 
We refer to this as {\em non-stationary elliptic Calogero-Sutherland (eCS) equation} since, for $-\ii\tau>0$ and $\ii\kappa>0$, 
it can be interpreted as non-stationary Schr\"odinger equation for the Hamiltonian defining the eCS model 
\cite{Cal71,Sut72,OP77}\footnote{{\em Calogero-Sutherland model} is short for {\em $A_{n-1}$ quantum Calogero-Moser-Sutherland model}.} 
with the parameter $\ii\kappa$ corresponding to a time scale.\footnote{In physics units $\hbar=2m=1$, the time variable in this Schr\"odinger equation is $t=-\pi\tau/\kappa>0$.}

Our solutions have the form 
\begin{equation*}
\psi(x_1,\ldots,x_n;\tau) =  C(\tau)\left( \prod^{n}_{j\neq k}  \theta\left(z_j/z_k;p\right) \right)^{g/2} \cP(z_1,\ldots,z_n;p) \quad (z_j =\ee^{\ii x_j},\; p=\ee^{2\pi\ii\tau}), 
\end{equation*} 
with   
\begin{equation} 
\label{theta} 
\theta(z;p) \equiv (1-z)\prod_{\ell=1}^\infty (1-p^\ell z)(1-p^\ell/z)\quad (z\in\C,\; |p|<1)
\end{equation} 
the usual multiplicative theta function, and $\cP(z_1,\ldots,z_n;p)$ are symmetric functions in the variables $(z_1,\ldots,z_n)$ 
that reduce to symmetric Laurent polynomials in the limit $p\to 0$.\footnote{For simplicity, we have chosen not to specify the normalization function $C(\tau)$ in the introduction. 
However, this function is important to understand the limit $p\to 0$; see Section~\ref{sec:Result} for details.}
Our main result are explicit integrals representing the latter functions.

In particular, for $\kappa=g$, we construct such solutions $\psi_\lambda(x_1,\ldots,x_n;\tau)$ of \eqref{Eq:Main} labeled by integer vectors $\lambda=(\lambda_1,\ldots,\lambda_n)\in\Z^n$, 
and the corresponding integrals we thus obtain are 
\begin{multline} 
\label{cPintro} 
 \cP_{\lambda}(z_1,\ldots,z_n;p) 
= \left(\prod_{j=1}^n z_{j}^{\lambda_{n}} \right) \left( \prod_{a=1}^{{n}-1} \prod_{j=1}^{a} \oint_{|\xi_{aj}|=\rho_{a}}\frac{d\xi_{aj}}{2\pi\ii\xi_{aj}} \xi_{aj}^{\lambda_{a}-\lambda_{a-1}}\right) \\
\times 
\left(  \frac{\prod_{j\neq k}^{n-1}\theta(\xi_{n-1,j}/\xi_{n-1,k};p)^g}{\prod_{j=1}^{n}\prod_{k=1}^{n-1}\theta(z_{j}/\xi_{n-1,k};p)^g}\right)
 \left(\prod_{a=1}^{n-2}\frac{\prod_{j\neq k}^a\theta(\xi_{aj}/\xi_{ak};p)^g}{\prod_{j=1}^{a+1}\prod_{k=1}^{a}\theta(\xi_{a+1,j}/\xi_{ak};p)^g}\right)
 \end{multline} 
where the integration contours are circles with radii $\rho_a>1$ such that $|p|\rho_{a-1}<\rho_{a}<\rho_{a-1}$ for all $a$, with arbitrary $\rho_{n-1}$ in the range $1<\rho_{n-1}<1/|p|$. 
We show that these integrals define analytic functions in the region $|p|\rho_{n-1}<|z_j|<\rho_{n-1}$ and $|p|<1$. 
Moreover,  when $\lambda_1\geq \lambda_2\geq \cdots\ge \lambda_n$, then the limiting case $p\to 0$ of \eqref{cPintro} reduces (essentially\footnote{There is a difference in the definition of the integration contours; see Remark~\ref{rem:contour} for details. 
Moreover, if $\lambda_n<0$, then we get a Jack polynomial up to a factor $(z_1\cdots z_n)^{\lambda_n}$; see Remark~\ref{rem:Jack}.}) 
to an integral representation of the Jack polynomials \cite{Jac70} obtained by Awata, Matsuo, Odake \& Shiraishi in \cite{AMOS95} in generalization of a result by Mimachi \& Yamada \cite{MY95}, up to explicitly known non-zero constants (see Section~\ref{sec:Result}). 
For integer vectors $\lambda\in\Z^n$ that do {\em not} satisfy the condition $\lambda_1\geq \lambda_2\cdots\geq \lambda_n$, 
the functions $\cP_{\lambda}(z_1,\ldots,z_n;p)$ vanish in the limit $p\to 0$, but they are non-trivial for non-zero $p$. 

Our solution for $\kappa=g$ is complete in the sense that we obtain elliptic generalizations of {\em all} Jack polynomials. 
Our general results are solutions for $\kappa = k g$ for integer $k\geq 1$, but for $k\geq 2$ we only get elliptic generalizations for {\em some} Jack polynomials; 
a precise statement of our results for all integers $k\geq 1$ is given in Theorem~\ref{thm}. 

\begin{remark} 
\label{rem:contour} 
The integral representations of the Jack polynomials obtained from our result in the limit $p\to 0$ differ from the results in \cite{AMOS95} for non-integer $g$ in the definition of the integration contours. 
More specifically, our integration contours are always non-intersecting concentric circles (see Fig.\ 2(a) in \cite{AMOS95}), whereas, 
for non-integer $g$, the ones in \cite{AMOS95} are concentric circles that are pinched so as to intersect at one point (see Fig.\ 2(b) in \cite{AMOS95}). 
To convince critical readers that our simpler integration contours work even for non-integer $g$,  
we give in Appendix~\ref{app:Jack} a concise proof of the result in \cite{AMOS95} explaining this technical point.\footnote{We are grateful to J. Shiraishi for valuable comments prompting us to write Appendix~\ref{app:Jack}.} 
\end{remark} 

We are motivated by Shiraishi's  {\em non-stationary eCS functions} \cite[Section~7]{S19} defined by explicit formal power series and which, 
as Shiraishi conjectured, provide exact eigenfunctions of the non-stationary eCS equation, for arbitrary $\kappa\in\C\setminus\{0\}$ and $g\in\C$.  
As discussed in Section~\ref{sec:Remarks}, we hope that our results can provide a useful test of Shiraishi's conjecture. 

The special case $n=2$ of our results is equivalent to an integral representation of elliptic generalizations of the Gegenbauer polynomials obtained in \cite{LT12}; 
see \cite[Section~3.2]{AL18} in the special case where $\tilde g_\nu=g$ for all $\nu=0,1,2,3$ and \cite{FLNO09} for similar such results. 
The key to our generalization of this result are generalized kernel function identities for the eCS model obtained in \cite{Lan06}.

The non-stationary eCS equation is also known as {\em elliptic Knizhnik-Zamolodchikov-Bernard (KZB) equation} \cite{KZ84,B88} 
in the literature,\footnote{To be more precise: it is a special case of the KZB equations corresponding to $\mathfrak{g} = \mathfrak{sl}_n$ and genus 1; 
see e.g.\ \cite{FV01}.} and integral representations of its solutions were obtained by Etingof \& Kirillov~Jr.\ \cite{EK94a} and Felder \& Varchenko \cite{FV95} using a representation theoretic approach; 
see also \cite{FG97,KT99}. 
These previous results are for $g\in\Z_{\geq 1}$ and $\kappa\in\C\setminus\{0\}$, whereas our results hold true for arbitrary values $g>1/2$ and $\kappa\in g\Z_{\geq 1}$. 
Moreover, our method is different in that it is based on generalized kernel functions rather than representation theory, and our results are more explicit in the sense that they are expressed in terms of more elementary functions and integrations. 

\noindent {\bf Plan:}  In the rest of this paper we review the integral representations of the Jack polynomials (Section~\ref{sec:Jack}), 
give a mathematically precise formulation of our results (Section~\ref{sec:Result}), prove our results (Section~\ref{sec:Proof}), 
and conclude with some remarks to put our results in perspective (Section~\ref{sec:Remarks}). 
For the convenience of the reader we include two appendices with definitions and properties of standard special functions that we use (Appendix~\ref{app:elliptic}) and  
a short proof of the integral representations of the Jack polynomials (Appendix~\ref{app:Jack}). 
A third appendix contains some technical details in our proof in Section~\ref{sec:Proof} (Appendix~\ref{app:Details}).  

\noindent {\bf Notation:} 
Throughout this paper, $n\in\Z_{\geq 1}$ , $\tau\in\C$ with imaginary part $\Im(\tau)>0$,  and $\ii\equiv +\sqrt{-1}$. 
 We write $\sum_{j<k}^{n}$ for the sum over all $j,k=1,2,\ldots,{n}$ such that $j<k$, etc. 
For $\xi$ a complex variable, the symbol $\oint_{|\xi|=\rho} d\xi(\cdots)$ is used for the integral over the counterclockwise oriented circle $|\xi|=\rho$ of radius $\rho>0$. 
We often follow the notation and conventions in Macdonald's book \cite{Mac95}.
In particular, we write $x$ short for $(x_1,\ldots,x_n)$ in the following, and similarly for $z$, etc. 
If $z=(z_1,\ldots,z_n)\in\C^n$ then $1/z$ is short for $(1/z_1,\ldots,1/z_n)$. 
For $L\in\Z_{\geq 1}$, $r=(r_1,\ldots,r_L)\in\Z^L$ and $s=(s_1,\ldots,s_L)\in\Z_{\geq 1}^L$, we sometimes write
\begin{equation} 
\label{shortnotation} 
(r_1^{s_1},\ldots,r_L^{s_L})\; \text{ short for }\; (\underbrace{r_1,\ldots,r_1}_{s_1\text{ times}},\ldots,\underbrace{r_L,\ldots,r_L}_{s_L\text{ times}}),  
\end{equation} 
for example, $(3^2,(-1)^3,2^1)$ is short for $(3,3,-1,-1,-1,2)$.
\bigskip

\noindent {\bf Acknowledgements:} We thank M.~Halln\"as, M.~Noumi, H.~Rosengren, B.~Shapiro, and J.~Shiraishi for helpful discussions.
We are grateful to M.~Noumi and J.~Shiraishi for their interest, encouragement, and advise. 
We gratefully acknowledge partial financial support by the Stiftelse Olle Engkvist Byggm\"astare (contract 184-0573). 
E.L. acknowledges support by VR Grant No. 2016-05167. 
The work of F.A. was carried out as a JSPS International Research Fellow.

\section{Jack polynomials (review)} 
\label{sec:Jack}
For the convenience of non-expert readers, we shortly review a definition and some properties of the Jack polynomials (Section~\ref{sec:DefJack}). 
We also present the integral representations of the Jack polynomials obtained in \cite{AMOS95}, 
elaborating on the integration contours issue described in Remark~\ref{rem:contour} (Section~\ref{sec:IntJack} and Appendix~\ref{app:Jack}). 

The rest of this section can be skipped without loss of continuity.

\subsection{Definition and properties}
\label{sec:DefJack}
For $n\in\Z_{\geq 1}$, we denote by $\PP_n$ the set of all partitions of length $\leq n$, i.e., $\lambda\in\PP_n$ if $\lambda=(\lambda_1,\ldots,\lambda_n)\in\Z^n$ satisfies $\lambda_1\geq \cdots\geq \lambda_n\geq 0$. 
We recall that, for $\lambda\in\PP_n$ and $z=(z_1,\ldots,z_n)\in\C^n$, the {\em monomial symmetric polynomials} are defined as 
\begin{equation*} 
\label{mlambda}
m_{\lambda}(z) \equiv \sum_a z_1^{a_1}\cdots z_n^{a_n}
\end{equation*} 
with the sum over all distinct permutations $a=(a_1,\ldots,a_n)$ of $\lambda=(\lambda_1,\ldots,\lambda_n)$. 

Let $\Lambda_n\equiv \C[z_1,\ldots,z_n]^{S_n}$ be the ring of symmetric polynomials $u(z_1,\cdots,z_n)$ in the variables $z=(z_1,\ldots,z_n)\in\C^n$ 
with complex coefficients.\footnote{We work with complex numbers since we are motivated by quantum mechanics.} 
The monomial symmetric polynomials $m_{\lambda}(z)$ obviously constitute a basis in $\Lambda_n$ labeled by partitions $\lambda\in\PP_n$. 
The Jack polynomials $P^{(1/g)}_{\lambda,n}(z)$ constitute another such basis depending on a parameter $1/g>0$.
They can be defined by a Gram-Schmidt orthogonalization for the following scalar product,  
\begin{equation} 
\label{product} 
\langle f,g\rangle'_{n} \equiv \frac1{n!}\oint_{|z_1|=1}\frac{dz_1}{2\pi\ii z_1}\cdots  \oint_{|z_n|=1}\frac{dz_n}{2\pi\ii z_n}\left(\prod_{j\neq k}^n(1-z_j/z_k) \right)^g f(z)\overline{g(z)}
\quad (f,g\in\Lambda_n)
\end{equation} 
(the bar denotes complex conjugation) using the monomial symmetric polynomials and {\em dominance partial ordering}, 
\begin{equation*} 
\label{leq} 
\mu\leq \lambda\Leftrightarrow \mu_1+\cdots+\mu_j\leq \lambda_1+\cdots+\lambda_j\quad \forall j=1,2,\ldots,n\quad (\mu,\lambda\in\PP_n). 
\end{equation*} 
To be precise, it was proved by Macdonald in \cite{Mac87} that the following two conditions define the Jack polynomials $P^{(1/g)}_{\lambda}\in\Lambda_n$ uniquely, 
\begin{equation*} 
\begin{split}
(a) &\quad  P^{(1/g)}_{\lambda} = m_{\lambda} + \sum_{\mu<\lambda} v_{\lambda\mu}m_{\mu}\; \text{ for some }\; v_{\lambda\mu}\in\R, \\
(b)& \quad \left\langle P^{(1/g)}_{\lambda},m_{\mu}\right\rangle'_{n} = 0\; \text{ for all }\;  \mu<\lambda. 
\end{split}
\end{equation*} 

(In the following we write $P^{(1/g)}_{\lambda,n}$ rather than $P^{(1/g)}_{\lambda}$ to avoid possible confusion.) 

We quote three well-known properties of the Jack polynomials (they are all stated and proved in Macdonald's book \cite{Mac95}). 
First, the following orthogonality relation with respect to the scalar product in \eqref{product}, 
\begin{equation} 
\label{orth}
 \left\langle P^{(1/g)}_{\lambda,n},P^{(1/g)}_{\mu,n}\right\rangle'_{n} = \delta_{\lambda\mu}\cN_{\lambda,n}(g)
\end{equation} 
with the quadratic norms
\begin{equation} 
\label{cN}
\cN_{\lambda,n}(g)\equiv \prod_{j<k}^n\frac{\Gamma(\lambda_j-\lambda_k+g(k-j+1))\Gamma(\lambda_j-\lambda_k+g(k-j+1)+1)}{\Gamma(\lambda_j-\lambda_k+g(k-j))\Gamma(\lambda_j-\lambda_k+g(k-j)+1)}
\end{equation} 
($\delta_{\lambda\mu}$ is the Kronecker symbol equal to 1 for $\lambda=\mu$ and 0 otherwise, and $\Gamma$ is the Euler gamma function). 
Second, a generating function due to Stanley \cite{Sta89},\footnote{Note that, below, we make no distinction between $\lambda\in\PP_m$ and $\lambda=(\lambda_1,\ldots,\lambda_m,0^{n-m})\in\PP_n$, 
i.e., we identify partitions that differ only by a string of zeros.}
\begin{equation} 
\label{GenerateJack}
\prod_{j=1}^n\prod_{k=1}^m\frac1{(1-z_j/\xi_k)^g} = \sum_{\lambda\in\PP_m} \frac1{b_\lambda(g)}P^{(1/g)}_{\lambda,n}(z)P^{(1/g)}_{\lambda,m}(1/\xi)\quad (m\leq n)
\end{equation} 
for $n,m\in\Z_{\geq 1}$, with 
\begin{equation} 
\label{b}
b_\lambda(g)\equiv \prod_{(j,k)\in\lambda} \frac{\lambda_j-k+g(\lambda_k'-j+1)}{\lambda_j-k+1+g(\lambda_k'-j)}\quad (\lambda\in\PP_m)
\end{equation} 
where $\prod_{(j,k)\in\lambda}$ here is short for $\prod_{j=1}^m\prod_{k=1}^{\lambda_j}$ ($\lambda'$ is the partition conjugate to $\lambda$). 
Third, the following Pieri relation of the Jack polynomials, 
\begin{equation} 
\label{shift}
(z_1\cdots z_n)^r P^{(1/g)}_{\lambda,n}(z)= P^{(1/g)}_{\lambda+(r^n),n}(z)\quad (\lambda\in\PP_n, r\in\Z_{\geq 1})
\end{equation} 
with $(r^n)$ short for $(r,\ldots,r)$ ($n$ times). 

\begin{remark}
\label{rem:Jack}
Using \eqref{shift}, one can naturally extend the definition of the Jack polynomials $P^{(1/g)}_{\lambda,n}(z)$ to all {\em ordered integer vectors} $\lambda$, i.e., to all  $\lambda= (\lambda_1,\ldots,\lambda_n)\in\Z^n$ satisfying the condition 
\begin{equation*} 
\label{genlam} 
\lambda_1\geq \cdots\geq \lambda_n 
\end{equation*} 
(that is, one can drop the restriction that $\lambda_n$ is non-negative). 
Thus, for $\lambda_n<0$, we define\footnote{Note that $\lambda-(\lambda_n^n)$ always is a partition under the stated conditions.} 
\begin{equation*} 
P^{(1/g)}_{\lambda,n}(z)\equiv (z_1\cdots z_n)^{\lambda_n}P^{(1/g)}_{\lambda-(\lambda_n^n),n}(z).
\end{equation*} 
In the following we use this generalized definition of the Jack polynomials even though, for $\lambda_n<0$, they are Laurent polynomials. 
Note that the orthogonality relations in \eqref{orth}--\eqref{cN} hold true also for these generalized Jack polynomials, 
and $\cN_{\lambda,n}(g)=\cN_{\lambda+(r^n),n}(g)$ for all $r\in\Z$. 
\end{remark} 

\subsection{Integral representations}
\label{sec:IntJack}
 For $L\in\Z_{\geq 1}$, $r=(r_1,\ldots,r_L)\in\Z^L$ and $s=(s_1,\ldots,s_L)\in\Z_{\geq 1}^L$, let 
 \begin{equation} 
 \label{lambdaa}
\lambda^{(a)}\equiv (r_1^{s_1},\ldots,r_a^{s_a}),\quad N_a\equiv s_1+\cdots+s_a\quad (a=1,\ldots,L), \quad n\equiv N_L, 
\end{equation} 
and 
\begin{subequations}  
\begin{multline} 
\label{cP0thmtrig} 
\cP_{r,s,L}(z_1,\ldots,z_n) \equiv 
\left( \prod_{j=1}^{n} z_j^{r_L}\right) 
 \left(\prod_{a=1}^{L-1}\prod_{j=1}^{N_a}\oint_{|\xi_{aj}|=\rho_a}  \frac{d\xi_{aj}}{2\pi\ii\xi_{aj}} \xi_{aj}^{r_a-r_{a+1}}  \right)
 \\ \times 
 \left(  \frac{\prod_{j\neq k}^{N_{L-1}}(1-\xi_{{N_{L-1},j}}/\xi_{{N_{L-1},k}})^g}{\prod_{j=1}^{n}\prod_{k=1}^{N_{L-1}}(1-z_{j}/\xi_{{N_{L-1},k}})^g}\right)
 \left(\prod_{a=1}^{L-2}\frac{\prod_{j\neq k}^{N_a}(1-\xi_{aj}/\xi_{ak})^g}{\prod_{j=1}^{N_{a+1}}\prod_{k=1}^{N_a}(1-\xi_{a+1,j}/\xi_{ak})^g}\right)
\end{multline} 
with integration contour radii $\rho_a>0$ restricted as follows, 
\begin{equation} 
\label{rhoa0} 
\rho_{a+1}<\rho_a \quad (a=1,\ldots,L-2), \quad 0<|z_j| <\rho_{L-1}\quad (j=1,\ldots,n), 
\end{equation} 
and $\rho_{L-1}>0$ arbitrary. 
\end{subequations}  
 Note that, if and only if $r=(r_1,\ldots,r_L)$ satisfies the condition $r_1\geq \cdots\geq r_L$, 
 then $\lambda^{(a)}-(r_{a+1}^{N_a})$ is a partitions of length $\leq N_a$ for all $a=1,\ldots,L-1$, and therefore 
\begin{equation} 
\label{Cdef}
C_L(r;s)\equiv \prod_{a=1}^{L-1} \frac{N_a!\, \cN_{\lambda^{(a)},N_a}(g)}{b_{\lambda^{(a)}-(r_{a+1}^{N_a})}(g)}
\end{equation} 
is well-defined ($\cN_{\lambda,n}(g)$ and $b_{\lambda}(g)$ are given in \eqref{cN} and \eqref{b}, respectively). 
Moreover, then the Jack polynomial $P^{(1/g)}_{\lambda,n}(z)$ with $\lambda = \lambda^{(L)}$ exists and is non-zero. 

\begin{proposition}[Awata et al.\ \cite{AMOS95}]
\label{thm:AMOS} 
Let $L\in\Z_{\geq 1}$, $r\in\Z^L$, $s\in\Z_{\geq 1}^L$, $\lambda^{(a)}$ as in \eqref{lambdaa}, and $\rho_{L-1}>0$ arbitrary. Then the following hold true. 

\noindent (a) The integrals $\cP_{r,s,L}(z_1,\ldots,z_n)$ in \eqref{cP0thmtrig}--\eqref{rhoa0} are well-defined analytic functions of $(z_1,\ldots, z_n)$ in the region $0<|z_j|<\rho_{L-1}$ $(j=1,\ldots,n)$. 

\noindent (b) If $r=(r_1,\ldots,r_L)\in\Z^L$ satisfies the condition $r_1\geq \cdots\geq r_L$, then 
\begin{equation*} 
\cP_{r,s,L}(z_1,\ldots,z_n) = C_L(r;s) P^{(1/g)}_{\lambda,n}(z_1,\ldots,z_n) \quad (\lambda= \lambda^{(L)}). 
\end{equation*}
Otherwise, $\cP_{r,s,L}(z_1,\ldots,z_n)=0$. 
\end{proposition}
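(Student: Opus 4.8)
The plan is to prove this by induction on $L$, using the Stanley generating function \eqref{GenerateJack} to perform the contour integrals one layer at a time. For part (a), I would first observe that the integrand in \eqref{cP0thmtrig} is, layer by layer, a product of factors of the form $(1-u/v)^{-g}$ and $(1-u/v)^{g}$ together with monomial prefactors $\xi_{aj}^{r_a-r_{a+1}}$. The radii constraints \eqref{rhoa0} are arranged precisely so that, on the contour $|\xi_{aj}|=\rho_a$, all the ``outer'' variables $\xi_{a+1,k}$ (or the $z_j$ when $a=L-1$) lie outside, while all the ``inner'' variables $\xi_{a-1,k}$ lie inside; this makes each factor $(1-\xi_{aj}/\xi_{a+1,k})^{-g}$ and $(1-\xi_{a-1,k}/\xi_{aj})^{-g}$ expandable as an absolutely convergent power series on the contour, so the integrand is a continuous function on the (compact) torus of integration and analytic in the $z_j$'s in the stated region. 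Fubini and Morera then give analyticity of $\cP_{r,s,L}$; I would relegate the careful uniform-convergence bookkeeping to Appendix~\ref{app:Details}.

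For part (b), the heart of the matter is the innermost integration over the $N_{L-1}$ variables $\xi_{L-1,j}$ (call them $\xi_j$ with radius $\rho\equiv\rho_{L-1}$). I would first expand the two $z$-dependent factors using \eqref{GenerateJack}: writing $m\equiv N_{L-1}$, one has
\begin{equation*}
\prod_{j=1}^n\prod_{k=1}^{m}(1-z_j/\xi_k)^{-g} = \sum_{\mu\in\PP_m}\frac{1}{b_\mu(g)}P^{(1/g)}_{\mu,n}(z)\,P^{(1/g)}_{\mu,m}(1/\xi),
\end{equation*}
and similarly expand $\prod_{j\ne k}^{m}(1-\xi_j/\xi_k)^g$ (the Jack ``inner product weight'') and the factor coupling $\xi$ to the next layer $\xi_{L-2}$ in terms of Jack polynomials in $1/\xi$. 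Since the monomial prefactor is $\prod_j \xi_j^{r_{L-1}-r_L}$ — a uniform power, hence by \eqref{shift} merely a shift $\mu\mapsto\mu+((r_{L-1}-r_L)^m)$ inside the Jack polynomial — the $\xi$-integral collapses, via the orthogonality relation \eqref{orth}--\eqref{cN}, to a single term: it picks out $\mu$ equal to (the relevant shift of) $\lambda^{(L-1)}=(r_1^{s_1},\dots,r_{L-1}^{s_{L-1}})$. This produces the factor $N_{L-1}!\,\cN_{\lambda^{(L-1)},N_{L-1}}(g)/b_{\lambda^{(L-1)}-(r_L^{N_{L-1}})}(g)$ and reduces the problem from $L$ layers to $L-1$ layers with the new outermost polynomial being $P^{(1/g)}_{\lambda^{(L-1)}+(r_L^{N_{L-1}}),?}$ — but one must be careful, because after integrating out the innermost layer the ``external'' polynomial is now a Jack polynomial in the $z_j$'s of degree structure matching $\lambda^{(L-1)}$, not in a set of integration variables, so the recursion is really: $\cP_{r,s,L}(z) = \frac{N_{L-1}!\,\cN_{\lambda^{(L-1)}}(g)}{b_{\lambda^{(L-1)}-(r_L^{N_{L-1}})}(g)}\,(z_1\cdots z_n)^{r_L}\big/(z_1\cdots z_n)^{r_{L-1}}\cdot[\text{result of }(L-1)\text{-layer integral with top }r_{L-1}]$, and unwinding the prefactors $(z_1\cdots z_n)^{r_L}$ together with the telescoping of the $r_a^{N_a}$ shifts via \eqref{shift} reassembles exactly $C_L(r;s)P^{(1/g)}_{\lambda^{(L)},n}(z)$. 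When $r$ is \emph{not} ordered, at some layer $a$ the partition $\lambda^{(a)}$ fails to be dominated appropriately and the shifted label $\mu$ forced by orthogonality is not a partition (it has a negative part where a non-negative one is required, or fails $\mu\le\lambda$ in the Gram--Schmidt triangularity), so the orthogonality pairing vanishes and hence $\cP_{r,s,L}\equiv 0$; I would phrase this as: the only $\mu$ that can contribute must be both a partition and $\le$ the relevant label, and no such $\mu$ exists precisely when the monomial shift $r_a-r_{a+1}<0$ pushes outside $\PP$.

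The base case $L=1$ is immediate: there are no integrations, the prefactor products are empty, and $\cP_{r,s,1}(z)=(z_1\cdots z_n)^{r_1}=m_{(r_1^n)}(z)=P^{(1/g)}_{(r_1^n),n}(z)$ with $C_1(r;s)=1$ (empty product), which matches since $\lambda^{(1)}=(r_1^n)$ and the ordering condition is vacuous. I expect the main obstacle to be the bookkeeping in the inductive step: carefully tracking how the monomial prefactors $\xi_{aj}^{r_a-r_{a+1}}$ and the overall $(z_1\cdots z_n)^{r_L}$ interact with the repeated use of the Pieri/shift identity \eqref{shift}, verifying that the product of $b$- and $\cN$-factors telescopes into exactly \eqref{Cdef}, and — most delicately — justifying that the order of summation (from the three Stanley expansions) and integration may be interchanged on each contour, which again reduces to the absolute-convergence estimates governed by the radii inequalities \eqref{rhoa0}. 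The contour-subtlety flagged in Remark~\ref{rem:contour} does not arise here because in this trigonometric setting the weight $(1-z_j/\xi_k)^g$ (as opposed to a theta function) lets us take genuinely non-intersecting concentric circles; the only place $g\notin\Z$ matters is in choosing branches, and since each factor is evaluated with argument of modulus $<1$ the principal branch is unambiguous.
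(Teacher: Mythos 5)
Your overall strategy---Stanley's generating function \eqref{GenerateJack}, orthogonality \eqref{orth}, and the shift relation \eqref{shift}, applied one layer of contours at a time---is exactly the paper's strategy, and your treatment of part (a) and of the base case $L=1$ is fine. But your inductive step peels off the wrong layer first, and the mechanism you describe for it is incorrect. You integrate over $\xi_{L-1}$ (the contour adjacent to the $z_j$'s) while $\xi_1,\dots,\xi_{L-2}$ are still unintegrated, Stanley-expanding \emph{both} couplings of that layer: the one to $z$ (index $\mu$) and the one to $\xi_{L-2}$ (index $\nu$). Orthogonality then does \emph{not} collapse the integral to a single term; it only enforces $\mu=\nu+((r_{L-1}-r_L)^{N_{L-1}})$, leaving a full sum over $\nu\in\PP_{N_{L-2}}$ in which the $z$-dependence and the $\xi_{L-2}$-dependence remain entangled through $\nu$. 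Nothing at this stage can ``pick out $\mu$ equal to a shift of $\lambda^{(L-1)}$'', because the data $r_1,\dots,r_{L-2}$ still sit in the unperformed integrations; and the leftover object is not an $(L-1)$-layer integral of the form \eqref{cP0thmtrig} in the variables $z$ (the $z_j$ are never coupled to $\xi_{L-2}$ by factors $(1-z_j/\xi_{L-2,k})^{-g}$), so your displayed recursion is structurally inconsistent---the $(L-1)$-layer integral it invokes would be a function of $N_{L-1}$ variables, not of $z$. The argument is repaired either by carrying the nested sums through all layers and collapsing them only at the innermost one, or---cleaner, and what the paper does---by running the induction from the other end: the $\xi_1$-integrand contains the \emph{single} Jack polynomial $(\xi_{11}\cdots\xi_{1,N_1})^{r_1-r_2}=P^{(1/g)}_{((r_1-r_2)^{N_1}),N_1}(\xi_1)$, so pairing it against the Stanley expansion of the $\xi_2$--$\xi_1$ coupling genuinely collapses to one term (this is \eqref{III}), and each subsequent layer again carries a single known Jack polynomial. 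Equivalently: apply the inductive hypothesis to the inner $L-1$ layers regarded as a function of $\xi_{L-1}$, and only then do the final $\xi_{L-1}$ integration against the Stanley expansion of the $z$-coupling.

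Two smaller points. First, you invoke the orthogonality \eqref{orth} for contours of radius $\rho_a\neq 1$, but the scalar product \eqref{product} is defined on the unit torus; you need the (easy but necessary) fact that the pairing computed on circles of radius $\rho$, with $g(1/z)$ in place of $\overline{g(z)}$, is independent of $\rho$---the paper proves this from homogeneity of the $m_\lambda$ and self-adjointness of the Euler operator. Second, you should not ``expand the weight $\prod_{j\neq k}(1-\xi_j/\xi_k)^g$ in Jack polynomials'': it is the measure defining the scalar product and must be kept intact for the orthogonality argument to apply. Your vanishing argument for unordered $r$ is correct in substance once the induction runs in the right order: at the step where $r_{a+1}>r_a$, the forced label $\nu=\lambda^{(a)}-(r_{a+1}^{N_a})$ fails to be a partition, so the pairing vanishes for every admissible $\nu$.
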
 

\begin{proof} 
See Appendix~\ref{app:Jack}. 
\end{proof} 

Every partition $\lambda=(\lambda_1,\ldots,\lambda_n)\in\PP_n$ can be represented as 
\begin{equation} 
\label{lambdareps}
\lambda=(r_1^{s_1},\ldots,r_L^{s_L})
\end{equation} 
with $L=n$, $r=\lambda$, and $s=(1^n)$, and thus $P^{(1/g)}_{\lambda,n}(z)= \cP_{\lambda,(1^n),n}(z_1,\ldots,z_n)/C_n(\lambda;(1^n))$ provides an integral representation for every Jack polynomials. 
If $\lambda$ is such that all $\lambda_j$ are distinct this is the only integral representation of the corresponding Jack polynomial provided by Proposition~\ref{thm:AMOS}. 
However, if some of the $\lambda_j$ are identical, there are several different ways to write $\lambda$ as in \eqref{lambdareps}, 
and thus one gets  from Proposition~\ref{thm:AMOS} several different integral representations  $P^{(1/g)}_{\lambda,n}(z)= \cP_{r,s,L}(z_1,\ldots,z_n)/C_L(r;s)$ of one and the same Jack polynomial. 
The number of integrations in such a representation is 
$$\sum_{a=1}^{L-1}N_a = \sum_{j=1}^{L-1}(L-j)s_j,$$ 
and thus different such representations have different complexity. 

Note that the integral in \eqref{cPintro} is a natural elliptic generalization of the one in \eqref{cP0thmtrig} in the special case $L=n$, $r=\lambda$, and $s=(1^n)$. 

\section{Result}
\label{sec:Result} 
We give a mathematically precise formulation of the main result in this paper. 

We introduce the Euler operator 
\begin{equation} 
\label{PN}
D_{n}(x) \equiv -\ii \sum_{j=1}^{n} \frac{\partial}{\partial x_j}.
\end{equation} 
To simplify some formulas later on we introduce the notation 
\begin{equation} 
\label{wp}
V(x;\tau) \equiv  \wp(x|\pi,\pi\tau)+\frac{\eta_1(\tau)}{\pi}=\sum_{m\in\Z} \frac1{4\sin^2\left(\half x+\pi m \tau\right)}\quad (x\in\C) 
\end{equation} 
and 
\begin{equation} 
\label{eta1} 
 \frac{\eta_1(\tau)}{\pi} \equiv  \frac1{12}+\sum_{m=1}^\infty\frac1{2\sin^2(m\pi\tau)}
\end{equation} 
(see Appendix~\ref{app:elliptic}). Moreover, we find it convenient to write the non-stationary eCS equation as 
\begin{equation} 
\label{Eq:Main1}
\left(  \kappa \frac{\ii}{\pi} \frac{\partial}{\partial\tau}  -\sum_{j=1}^{n}\frac{\partial^2}{\partial x_j^2} + \sum_{j<k}^{n} 2g(g-1)V(x_j- x_k ;\tau)  \right) \psi(x;\tau) = E(\tau) \psi(x;\tau)
\end{equation} 
where $E(\tau)$ is an eigenvalue that we allow to be non-zero. 

To explain the significance of this eigenvalue, we note that \eqref{Eq:Main1} is invariant under  transformations
\begin{equation*} 
\psi(x;\tau) \to  C(\tau) \psi(x;\tau) ,  \quad E(\tau)\to E(\tau)+ \kappa \frac{\ii}{\pi} \frac{1}{C(\tau)} \frac{\partial C(\tau)}{\partial \tau}
\end{equation*} 
for arbitrary differentiable non-zero functions $C(\tau)$. 
Thus, for non-zero $\kappa$,  one can transform $E(\tau)$ to zero, or any other value, by such a transformation. 
In particular, the differential equation in \eqref{Eq:Main1} can be transformed to \eqref{Eq:Main} by a change of the normalization of $\psi(x;\tau)$. 
However, we find it convenient to fix the normalization of $ \psi(x;\tau)$ so that the limits $\kappa\to 0$ and $\Im(\tau)\to+\infty$ are well-defined and non-trivial. 
We thus have to work with \eqref{Eq:Main1} rather than with \eqref{Eq:Main}. 
As will be seen, our normalization is such that $E(\tau)$ is equal to the eigenvalues of the trigonometric Calogero-Sutherland model \cite{Sut72} in the limit $\Im(\tau)\to +\infty$.

As mentioned in Section~\ref{sec:Intro}, we obtain explicit solutions of the non-stationary eCS equation in \eqref{Eq:Main1} 
for $\kappa=k g$ and $k\in\Z_{\geq 1}$ that provide elliptic generalizations of the Jack polynomials $P^{(1/g)}_{\lambda,n}(z)$. 
When $k=1$ then this solution can be constructed for all variable numbers $n\in\Z_{\geq1}$ and $\lambda\in\Z^n$ and is therefore complete in this sense, whereas, 
for $k\geq 2$, $n$ and $\lambda$ are restricted as follows,
\begin{subequations} 
\begin{equation*} 
n=1+k (L-1),\quad \lambda = (r_1,\underbrace{r_2,\ldots,r_2}_{\text{$k$ times}},\ldots,\underbrace{r_L,\ldots,r_L}_{\text{$k$ times}}) 
\end{equation*} 
or 
\begin{equation*}
n=k L,\quad \lambda = (\underbrace{r_1,\ldots,r_1}_{\text{$k$ times}},\ldots,\underbrace{r_L,\ldots,r_L}_{\text{$k$ times}}) 
\end{equation*} 
\end{subequations} 
for arbitrary $L\in\Z_{\geq 1}$ and integer vectors $r=(r_1,\ldots,r_L)\in\Z^L$. 
Using the notation in \eqref{shortnotation} we can distinguish these two cases by a parameter $s_1$ equal to $1$ in the first case and $k$ in the second case. 
Then both cases can be summarized as follows,
\begin{equation} 
\label{lambdagen}
n=s_1+k(L-1),\quad \lambda=(\lambda_1,\ldots,\lambda_n)=(r_1^{s_1},r_2^{k},\ldots,r_L^{k}). 
\end{equation} 
This suggests to label our solutions by the triple $(r,s,L)$ with $s= (s_1,k^{L-1})$. 

The pertinent integrals are given by 
\begin{subequations} 
\begin{multline} 
\label{cPell0} 
\cP_{r,s,L}(z_1,\ldots,z_n;p) \equiv 
\left( \prod_{j=1}^{n} z_j^{r_L}\right) 
 \left(\prod_{a=1}^{L-1}\prod_{j=1}^{N_a}\oint_{|\xi_{aj}|=\rho_a}  \frac{d\xi_{aj}}{2\pi\ii\xi_{aj}} \xi_{aj}^{r_a-r_{a+1}}  \right)
 \\ \times 
 \left(  \frac{\prod_{j\neq k}^{N_{L-1}}\theta(\xi_{{N_{L-1},j}}/\xi_{{N_{L-1},k}};p)^g}{\prod_{j=1}^{n}\prod_{k=1}^{N_{L-1}}\theta(z_{j}/\xi_{{N_{L-1},k}};p)^g}\right)
 \left(\prod_{a=1}^{L-2}\frac{\prod_{j\neq k}^{N_a}\theta(\xi_{aj}/\xi_{ak};p)^g}{\prod_{j=1}^{N_{a+1}}\prod_{k=1}^{N_a}\theta(\xi_{a+1,j}/\xi_{ak};p)^g}\right)
\end{multline} 
with $N_a=s_1+(a-1)k$ ($a=1,\ldots,L-1$) and integration radii $\rho_a$ constrained as follows, 
\begin{equation} 
\label{cPell1} 
|p|\rho_{a}<\rho_{a+1}<\rho_{a}\quad (a=1,\ldots,L-2),\quad |p|\rho_{L-1}<|z_j|<\rho_{L-1}
\end{equation} 
\end{subequations} 
for some arbitrary $\rho_{L-1}$ in the range $1<\rho_{L-1}<1/|p|$. 
Recall that the Jack polynomials are well-defined for all ordered integer vectors $\lambda$ (see Remark~\ref{rem:Jack}). 

\begin{theorem} 
\label{thm}
Let $g>1/2$, $k\in\Z_{\geq 1}$, $L\in\Z_{\geq 1}$, $r=(r_1,\ldots,r_L)\in\Z^L$, $s_1=1$ for $k=1$ and $s_1\in\{1,k\}$ for $k\geq 2$, $s = (s_1,k^{L-1})$, and $\rho_{L-1}$ arbitrary in the range $1<\rho_{L-1}<1/|p|$.
Then the following hold true. 

\noindent (a) The integrals $\cP_{r,s,L}(z_1,\ldots,z_n;p)$ in \eqref{cPell0}--\eqref{cPell1} are analytic functions of $(z,p)=(z_1,\ldots,z_n,p)$ in the following region  $\subset\C^{n+1}$, 
\begin{equation*} 
|p|\rho_{L-1}<|z_j|<\rho_{L-1} \quad (j=1,\ldots,n),\quad |p|<1. 
\end{equation*} 

\noindent (b) The functions 
\begin{equation*}
\psi_{r,s,L}(x_1,\ldots,x_n;\tau) \equiv \left( \prod^{n}_{j\neq k}  \theta\left(z_j/z_k;p\right) \right)^{g/2} \cP_{r,s,L}(z_1,\ldots,z_n;p) \quad (z_j =\ee^{\ii x_j},\; p=\ee^{2\pi\ii\tau})
\end{equation*} 
are solutions of the non-stationary eCS equation in  \eqref{Eq:Main1} for $\kappa=k g$ with the following eigenvalue determined by the integer vector $\lambda$ in \eqref{lambdagen}, 
\begin{equation} 
\label{Ethm} 
E_{r,s,L}(\tau) = \sum_{j=1}^{n}\left( \lambda_{j} + \half g ( n + 1 - 2 j) \right)^{2} +g^2n(n-1)\left(\frac{\eta_1(\tau)}{\pi}-\frac1{12} \right). 
\end{equation}
Moreover, 
\begin{equation} 
\label{pthm}
D_n(x) \psi_{r,s,L}(x;\tau) = d_{r,s,L}\psi_{\lambda}(x;\tau) ,\quad d_{r,s,L} = \sum_{j=1}^n \lambda_j. 
\end{equation} 

\noindent (c) If $r_1\geq \cdots\geq r_L$, then the integer vector $\lambda$ in  \eqref{lambdagen} is ordered, 
and the integral $\cP_{r,s,L}(z_1,\ldots,z_n;p)$ in \eqref{cPell0}--\eqref{cPell1} defines an elliptic generalization of the Jack polynomial $P^{(1/g)}_{\lambda,n}(z)$ in the following sense, 
\begin{equation*} 
\lim_{p\to 0} \cP_{r,s,L}(z_1,\ldots,z_n;p) = C_L(r;s) P^{(1/g)}_{\lambda,n}(z_1,\ldots,z_n)
\end{equation*} 
with the constant $C_L(r;s) $ in \eqref{Cdef}. Otherwise, $\lim_{p\to 0} \cP_{r,s,L}(z_1,\ldots,z_n;p) =0$. 
\end{theorem}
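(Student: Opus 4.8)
The plan is to proceed in three stages corresponding to parts (a), (b), (c), using the generalized kernel function identities for the eCS model from \cite{Lan06} as the main engine. For part (a), I would first establish that for fixed $p$ with $|p|<1$ the integrand in \eqref{cPell0} is, as a function of each integration variable $\xi_{aj}$ and each $z_j$, holomorphic and nonvanishing in a suitable annulus: the zeros and poles of $\theta(z;p)$ are located at $z\in p^{\Z}$, so the constraints \eqref{cPell1} are exactly what is needed to keep every factor $\theta(\xi_{aj}/\xi_{ak};p)$, $\theta(z_j/\xi_{N_{L-1},k};p)$, $\theta(\xi_{a+1,j}/\xi_{ak};p)$ away from its zero set, and similarly to control the branch of the $g$-th power (which requires $g>1/2$ only for the later norm/convergence estimates, not for analyticity). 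Then analyticity in $(z,p)$ jointly on the stated domain follows by Fubini together with Morera/Weierstrass, since the contours can be chosen locally uniformly in the parameters; I would relegate the bookkeeping of nested radius inequalities to an appendix, as the paper already signals with Appendix~\ref{app:Details}.

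For part (b), the heart of the matter is the kernel function identity. Define, schematically, the ``dressed'' kernel
\[
\Phi(x,y;\tau) = \Bigl(\prod_{j,k}\theta(z_j/w_k;p)\Bigr)^{-g}\Bigl(\text{products of }\theta(w_j/w_k;p)^g\Bigr)^{1/2}\Bigl(\text{products of }\theta(z_j/z_k;p)^g\Bigr)^{1/2},
\]
with $w_k=\ee^{\ii y_k}$, connecting an $n$-variable and an $N_{L-1}$-variable eCS Hamiltonian. The key identity I would invoke from \cite{Lan06} states that $\bigl(H_n(x;\tau)-H_{N_{L-1}}(y;\tau)\bigr)\Phi = \bigl(\text{const}\bigr)\Phi$ up to terms that are total derivatives in the $y$-variables and hence integrate to zero around closed contours, \emph{provided} the coupling between variable numbers is $\kappa=kg$ with the dimension-shift $n-N_{L-1}\in\{1,k\}$ realized by the two cases $s_1=1$ and $s_1=k$ in \eqref{lambdagen}. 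Iterating this identity down the chain $n\to N_{L-1}\to\cdots\to N_1\to 0$, each step contributing one layer of integration variables $\xi_{a\cdot}$ and the monomial twist $\xi_{aj}^{r_a-r_{a+1}}$ (which accounts for the shift in the eigenvalue between consecutive $r_a$'s), collapses the action of the non-stationary operator in \eqref{Eq:Main1} on $\psi_{r,s,L}$ to multiplication by the scalar \eqref{Ethm}; the $\tau$-derivative produces exactly the $\eta_1(\tau)$-dependent piece because $\partial_\tau\log\theta$ and $\partial_\tau\eta_1$ conspire as in the heat-equation-type relation for theta functions recorded in Appendix~\ref{app:elliptic}. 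The momentum statement \eqref{pthm} is easier: $D_n(x)$ acts on $\psi_{r,s,L}$ through $\sum_j z_j\partial_{z_j}$, the prefactor $\prod_{j\neq k}\theta(z_j/z_k;p)^{g/2}$ is annihilated by it (it is invariant under simultaneous scaling $z_j\mapsto tz_j$), and on $\cP_{r,s,L}$ the overall monomial $\prod z_j^{r_L}$ together with the contour integrals forces homogeneity of total degree $\sum_j\lambda_j$, giving $d_{r,s,L}=\sum_j\lambda_j$.

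For part (c), I would take the limit $p\to0$ inside the integral — legitimate by the locally uniform convergence established in (a), after first deforming the $\rho_a$ within their allowed ranges so they stay bounded away from $1$ and $\infty$ as $p\to0$. Since $\theta(z;p)\to 1-z$ as $p\to0$, the elliptic integrand \eqref{cPell0} converges to the trigonometric integrand \eqref{cP0thmtrig}, and the prefactor's $p\to0$ limit is absorbed into the normalization convention discussed around \eqref{Eq:Main1}. Then Proposition~\ref{thm:AMOS} (Awata et al.) identifies the limit with $C_L(r;s)P^{(1/g)}_{\lambda,n}(z)$ when $r_1\ge\cdots\ge r_L$ and with $0$ otherwise; the contour-subtlety flagged in Remark~\ref{rem:contour} is exactly why the paper supplies its own proof of Proposition~\ref{thm:AMOS} in Appendix~\ref{app:Jack} using non-intersecting circles, so I would cite that rather than the original pinched-contour argument.

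\emph{Main obstacle.} The crux — and the step I expect to consume most of the work — is verifying that the generalized kernel function identity of \cite{Lan06} holds in the \emph{non-stationary} ($\tau$-dependent) setting with the precise eigenvalue \eqref{Ethm}, i.e., that the ``error'' terms produced when the two-variable-number eCS operators act on $\Phi$ are genuinely $y$-total-derivatives under the nested-annulus contour constraints \eqref{cPell1}, and that the $\kappa=kg$ resonance is exactly what makes the cross-terms cancel. Controlling the interplay between the branch choices of $\theta(\cdot;p)^g$, the contour nesting, and the $\partial_\tau$ acting through $p=\ee^{2\pi\ii\tau}$ is where the analysis is most delicate; everything else is either bookkeeping (part (a)) or a reduction to already-established results (part (c)).
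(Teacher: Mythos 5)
Your proposal follows essentially the same route as the paper: part (c) by taking $p\to 0$ under the integral and citing Proposition~\ref{thm:AMOS}, part (a) by annulus/Morera-type arguments (the paper uses Hartogs's theorem after checking separate analyticity), and part (b) by iterating a kernel-function integral transform built from the identities of \cite{Lan06}, with the monomial twists $\xi_{aj}^{r_a-r_{a+1}}$ implementing the eigenvalue shifts. Two corrections are worth making, both concerning where the actual difficulty lies. First, the identity you schematically write as $\bigl(H_n(x)-H_M(y)\bigr)\Phi=(\mathrm{const})\,\Phi$ ``up to $y$-total derivatives'' is not the form in which it holds: the kernel identity of \cite{Lan06} is \emph{exact} and necessarily contains the term $(N-M)g\frac{\ii}{\pi}\partial_\tau$ acting on the kernel itself, i.e.\ $\bigl((N-M)g\frac{\ii}{\pi}\partial_\tau+H_N(x)-H_M(y)-c_{NM}\bigr)K_{NM}(x,y)=0$; without that term the identity fails, and with it there is nothing left to verify --- so the step you single out as the ``main obstacle'' is in fact a citation (Lemma~\ref{lemma:key}). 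Second, the total derivatives do not appear in the identity but arise afterwards, when one integrates by parts to transfer $D_M(y)$ and $H_M(y)$ from the kernel onto $\psi_M(y)$; the genuine analytic work is showing that the resulting boundary terms vanish, and \emph{this} --- not ``norm/convergence estimates'' --- is precisely where the hypothesis $g>1/2$ enters, via the $C^1$-regularity and integrability of $\prod_{j<k}\sin^{2g}\bigl(\tfrac{y_j-y_k}{2}\bigr)$ at coinciding points (Appendix~\ref{app:BT}). Relatedly, you do not explain how the formal $y$-integral is made precise; the paper shifts the contours to $[-\pi-\ii\epsilon,\pi-\ii\epsilon]^M$, and it is this shift that produces the nested radii $|p|\rho_a<\rho_{a+1}<\rho_a$ in \eqref{cPell1}. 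With these points repaired your outline matches the paper's proof.
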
 

Thus, by varying $\rho_{L-1}$ in the allowed range, the integral in \eqref{cPell0}--\eqref{cPell1} defines a function in the region $|p|<z_j<1/|p|$ ($j=1,\ldots,n$), $|p|<1$. 
It is interesting to note that the following integration radii satisfy the conditions in \eqref{cPell1}, 
\begin{equation*} 
\rho_a = \rho_{L-1}\epsilon^{1-L+a}\quad (a=1,\ldots,L-1), 
\end{equation*} 
for arbitrary $\epsilon$ and $\rho_{L-1}$ in the range $|p|<\epsilon<1$ and $1<\rho_{L-1}<1/|p|$. 

Part~(c) of Theorem~\ref{thm} is a corollary of the integral representation of the Jack polynomials \cite{AMOS95} reviewed in Section~\ref{sec:IntJack}. 
Indeed, $\lim_{p\to 0}\theta(z;p)=(1-z)$ implies that the integrals in \eqref{cPell0}--\eqref{cPell1} reduce to the ones in \eqref{cP0thmtrig}--\eqref{rhoa0} in the limit $p\to 0$, 
and thus Part~(c) is a simple consequence of Part~(b) of Proposition~\ref{thm:AMOS}. 

We therefore only need to prove Parts (a) and  (b) of this theorem. This is done in Section~\ref{sec:Proof}. 

\section{Proof}
\label{sec:Proof}
In this section we prove Parts (a) and (b) of Theorem~\ref{thm} (Part (c) is a corollary of Proposition~\ref{thm:AMOS}). 
After some preliminaries in Section~\ref{sec:defs}, we give an outline in Section~\ref{sec:outline} where we explain our proof strategy, suppressing technical details. 
The detailed proof is given in Section~\ref{sec:Proof1} with some technical parts deferred to Appendix~\ref{app:Details}. 

\subsection{Definitions and preliminary remarks}
\label{sec:defs}
Recall the definitions of $\theta(z;p)$, $V(x;\tau)$, and $\eta_1(\tau)/\pi$ in \eqref{theta}, \eqref{wp}, and \eqref{eta1}, respectively. 

We recall the definition of the Euler operator $D_{n}(x)$ in \eqref{PN}.  
We also introduce the short-hand notations 
\begin{equation} 
\label{PsiN} 
\Psi_{n}(x;\tau) \equiv \left(\prod_{j\neq k}^{n} \theta\left(z_j /z_k;p\right)\right)^{g/2}\quad (z_j=\ee^{\ii x_j},\; p = \ee^{2\pi\ii\tau})
\end{equation} 
and 
\begin{equation} 
\label{HN}
H_{n}(x;\tau) \equiv -\sum_{j=1}^{n}\frac{\partial^2}{\partial x_j^2} + \sum_{j<k}^{n} 2g(g-1)V(x_j- x_k;\tau).
\end{equation} 
With that we can write the non-stationary eCS equation in \eqref{Eq:Main1} short as follows, 
\begin{equation} 
\label{HNxeq}
\left(  \kappa\frac{\ii}{\pi} \frac{\partial}{\partial\tau}  + H_{n}(x;\tau) \ \right) \psi(x;\tau) = E(\tau)\psi(x;\tau). 
\end{equation} 
Moreover, $\Psi_{n}(x;\tau)$ is a common factor in all our solutions, i.e., the function of interest $\cP(z;p)$ is such that $\psi(x;\tau)=\Psi_{n}(x;\tau)\cP(z;p)$.  
 
To motivate our notation we mention in passing that the differential operator  $H_{n}(x;\tau)$ in \eqref{HN} has a natural physical interpretation as 
Hamiltonian of a quantum many-body system describing $n$ particles moving on a circle and interacting with two-body interactions given by the Weierstrass $\wp$-function, 
$\psi(x;\tau)$ is a quantum mechanical wave function, and $D_{n}(x)$ in \eqref{PN} can be interpreted as  momentum operator.

\subsection{Outline of proof of Theorem~\ref{thm}}
\label{sec:outline} 
We explain the strategy we use to construct the solutions in Theorem~\ref{thm}. 

From now on we simplify notation by suppressing the $\tau$-dependence in our equations if there is no danger of confusion, e.g., we write $H_{n}(x)$ short for $H_{n}(x;\tau)$ etc. 

A simple but important observation is the following: since the operators $ \kappa\frac{\ii}{\pi} \frac{\partial}{\partial\tau}  + H_{n}(x)$ and $D_{n}(x)$ obviously commute, 
we can construct our solutions $\psi(x)$ of \eqref{HNxeq} such that they are also eigenfunctions of $D_{n}(x)$.  

The key to our result is a generalized kernel function $K_{NM}(x,y)$ for the elliptic Calogero-Sutherland model obtained in \cite{Lan06}. 
This function is explicitly known (see \eqref{KNM}), and it satisfies the functional identities \cite{Lan06} 
\begin{equation} 
\label{gen1}
\left( D_N(x)+ D_M(y) \right)K_{NM}(x,y)=0,
\end{equation} 
\begin{equation} 
\label{gen} 
\left( (N-M)g\frac{\ii}{\pi}\frac{\partial}{\partial\tau} + H_N(x)-H_M(y)-c_{NM} \right)K_{NM}(x,y)=0
\end{equation} 
with a known constant $c_{NM}$; see Lemma~\ref{lemma:key} for the precise statement. 

We fix $N-M=k\in\Z_{\geq 1}$ and use this generalized kernel function to construct an integral transform that maps a given solution $\psi_M(y)$ of 
\begin{equation} 
\label{HMyeq1}
D_M(y)  \psi_M(y) = d_M \psi_M(y) ,\quad \left(  k g \frac{\ii}{\pi} \frac{\partial}{\partial\tau}  + H_M(y) \ \right) \psi_M(y) = E_M \psi_M(y) 
\end{equation} 
with known eigenvalues $E_M$ and $d_M$, to a solution of 
\begin{equation} 
\label{HMyeq2}
D_N(x)  \psi_N(x) = d_N \psi_N(x) ,\quad \left(  k g \frac{\ii}{\pi} \frac{\partial}{\partial\tau}  + H_N(x)\right) \psi_N(x) = E_N \psi_N(x) 
\end{equation}
with $N=M+k$ and eigenvalues $d_N$ and $E_N$ that can be computed. Since this new solution $\psi_N(x)$ has a larger number of variables, 
we can use such integral transforms to inductively construct solutions with an increasing sequence of variable numbers $N_a=N_1+(a-1)k$, $a=2,\ldots,L$, 
starting out with a variable number $N_1$ which is such that explicit solutions can be constructed by other methods. 

To obtain a variety of solutions it is important have free parameters in these integral transformations. 
We thus note that, if $\psi_N(x)$, $d_N$, and $E_N$ are solutions of the equations in \eqref{HMyeq2}, then 
\begin{equation} 
\label{Qtransform}
\psi^Q_N(x) = \ee^{\ii Q\sum_{j=1}^N x_j}\psi_N(x),\quad d_N^Q=d_N+NQ,\quad E_N^Q=E_N+2Qd_N+NQ^2
\end{equation} 
are solutions of the same equations for arbitrary real $Q$ (this can be checked by straightforward computations). 
This motivates the following ansatz for the integral transform: 
\begin{equation} 
\label{formal} 
(\cK^{QQ'}_{NM}\psi_M)(x) =  C_{NM}\ee^{\ii Q\sum_{j=1}^N x_j} \int \frac{d^My}{(2\pi)^M} K_{NM}(x,y)   \ee^{-\ii Q'\sum_{k=1}^M y_j}\psi_M(y) 
\end{equation} 
with real parameters $Q,Q'$ and $C_{NM}$ a non-zero constant to be determined. 
We stress that \eqref{formal} is formal since we neither specified the integration domain nor the parameters $Q,Q'$, 
and these details are important to get well-defined solutions. 
However, if we ignore such details for now, we can proceed with formal computations using {\eqref{gen1}--\eqref{HMyeq1}} and dropping boundary terms obtained by partial integrations 
used to transfer the action of the differential operator (details on this computation are given in the proof of Lemma~\ref{lem:1} below). 
We thus find that $\psi_N(x)=(\cK^{QQ'}_{NM}\psi_M)(x)$ satisfies \eqref{HMyeq2} with 
\begin{equation} 
\label{dNEN}
 d_N=d_M+NQ-MQ',\quad E_N = E_M +2(Q-Q')d_M +M(Q')^2 -2MQQ' +  NQ^2 + c_{NM};
\end{equation} 
see Lemma~\ref{lem:1} below for the mathematically precise statement. 

Turning now to our inductive scheme, we use two possible base cases $N=N_1$, namely $N_1=1$ and $N_1=k$ (for $k=1$ these two cases coincide).  
As explained later on, solutions of \eqref{HMyeq2} in these cases are given by 
\begin{subequations} 
\begin{equation} 
\label{psiN1=1}
N_1=1:\quad \psi_{N_1}(x_1)=z_1^{r_1},\quad d_{N_1} = r_1,\quad E_{N_1} = r_1^2
\end{equation} 
and 
\begin{equation} 
\label{psiN1=s}
N_1=k:\quad \psi_{N_1}(x)=\Psi_{k}(x)(z_1\cdots z_{k})^{r_1},\quad d_{N_1} = kr_1,\quad E_{N_1} = {k} r_1^2+c_{k 0}
\end{equation} 
\end{subequations} 
for arbitrary integer $r_1$, with $\Psi_{k}(x)$ the function in \eqref{PsiN} for $n=k$ and $c_{k 0}$ the constant in \eqref{gen} for $(N,M)=(k,0)$ (recall that $z_j=\ee^{\ii x_j}$). 
The induction step from $N_a$ to $N_{a+1}=N_a+k$ is given by the integral transform described above,
\begin{equation} 
\label{subs}
\psi_{N_{a}}(x) = \left(\cK^{Q_{a}Q'_{a}}_{N_{a}N_{a-1}}\psi_{N_{a-1}}\right)(x) 
\end{equation} 
for $a=2,3,\ldots, L$. 
We find that, at each such step, {the possible} parameters $Q_{a}$ and $Q_{a}'$ {are} determined by integer parameters $r_a$ and $r_a'$, 
and one can set $r'_a=r_a$, without loss of generality. 
Our solutions $\psi_{r,s,L}(x)$ are obtained as $\psi_{N_{L}}(x)$. 

With the outline of our methods explained, we now turn to making these arguments precise.

\subsection{Proof of Theorem~\ref{thm}} 
\label{sec:Proof1}
We prove Theorem~\ref{thm} following the outline in Section~\ref{sec:outline}.
\label{sec:proof}

\subsubsection{Generalized kernel function identity}
The result in this section was obtained in Ref.\ \cite{Lan06}.

Recall the definitions of $D_{n}(x)$ and $H_{n}(x)$  in \eqref{PN} and \eqref{HN}, respectively. 
We define
\begin{equation} 
\label{vtet}
\vartheta(x) \equiv 
2\sin\left(\half x\right)\prod_{m=1}^\infty\left(1-2p^m\cos(x) + p^{2m}\right) \quad (x\in\C,\; p=\ee^{2\pi\ii\tau})
\end{equation} 
proportional to the Jacobi theta function $\vartheta_1\left(\half x|\tau\right)$ (see Appendix~\ref{app:elliptic}). 
This allows us to write the function in \eqref{PsiN} as
 \begin{equation} 
\label{PsiN1} 
\Psi_{N}(x) = \left(\prod_{j<k}^{N} \vartheta\left(x_j - x_k\right)^2  \right)^{g/2}\quad (x\in\C^N)
\end{equation} 
(this is true since $\vartheta(x_j-x_k)^2=\theta(z_j/z_k;p)\theta(z_k/z_j;p)$ for $|z_j/z_k|=1$; note that we always assume that this latter condition is fulfilled).
With that we can define the generalized kernel function mentioned in the previous section as 
\begin{equation} 
\label{KNM}
K_{NM}(x,y) \equiv \frac{\Psi_N(x)\Psi_M(y)}{\left(\prod_{j=1}^N\prod_{k=1}^M\vartheta\left(x_j-y_k\right)\right)^g}
\end{equation} 
with $x=(x_1,\ldots,x_N)\in\C^N$ and $y\in(y_1,\ldots,y_M)\in\C^M$. 

\begin{lemma}
\label{lemma:key} 
For all $N,M\in \Z_{\geq 0}$ and $g\in\C$, the function $K_{NM}(x,y)$ in \eqref{KNM} satisfies the functional identities in \eqref{gen1}--\eqref{gen} with the constant 
\begin{equation} 
\label{cNM}
c_{NM} = g^2[N(N-1)-M(M-1)]\frac{\eta_1}{\pi} +\frac1{12}g^2(N-M)[N(N-1)+M(M-1)-2NM].
\end{equation} 
\end{lemma}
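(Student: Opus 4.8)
\textbf{Proof proposal for Lemma~\ref{lemma:key}.}

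The plan is to verify the two functional identities \eqref{gen1} and \eqref{gen} by direct computation, exploiting the product structure of $K_{NM}(x,y)$ in \eqref{KNM}. The first identity \eqref{gen1} is the easy one: since $D_N(x)+D_M(y) = -\ii(\sum_j \partial_{x_j} + \sum_k \partial_{y_k})$ is the generator of simultaneous translations $x_j\to x_j+\epsilon$, $y_k\to y_k+\epsilon$, and since $K_{NM}$ depends only on the differences $x_j-x_k$, $y_j-y_k$, and $x_j-y_k$ (through $\vartheta$, $\Psi_N$, $\Psi_M$), it is manifestly invariant under such translations; hence its logarithmic derivative along this direction vanishes, which is \eqref{gen1}. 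I would state this in one or two lines.

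For \eqref{gen}, the main work, I would proceed by logarithmic differentiation. Write $K_{NM} = \exp(F)$ with $F = F(x,y;\tau)$ the sum of logarithms of the $\vartheta$-factors appearing in $\Psi_N(x)$, $\Psi_M(y)$, and the denominator. Then $H_N(x)K_{NM}/K_{NM} = -\sum_j[(\partial_{x_j}F)^2 + \partial_{x_j}^2 F] + \sum_{j<k}2g(g-1)V(x_j-x_k)$, and similarly for $H_M(y)$, while $(\partial_\tau K_{NM})/K_{NM} = \partial_\tau F$. The key analytic inputs are the standard identities relating $\vartheta$ to $V$: namely $\partial_x^2 \log\vartheta(x) = -V(x;\tau) + (\text{const in }x)$ and the heat-type relation expressing $\partial_\tau\log\vartheta$ in terms of $\partial_x\log\vartheta$ and $\partial_x^2\log\vartheta$ (these follow from the properties of $\vartheta_1$ collected in Appendix~\ref{app:elliptic}). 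Substituting these, the pure second-derivative terms $-\sum\partial_{x_j}^2 F$ and the heat-equation contribution from $(N-M)g\frac{\ii}{\pi}\partial_\tau F$ combine, after using $g^2$ bookkeeping over the pairs $(j,k)$ and the $x$--$y$ pairs, to produce exactly the $V$-potential of $H_N(x)-H_M(y)$ plus the constant $c_{NM}$; one reads off \eqref{cNM} from counting how many $\vartheta$-factors of each type occur (the $\binom{N}{2}$, $\binom{M}{2}$, and $NM$ combinatorics produce the $N(N-1)$, $M(M-1)$, $-2NM$ pattern, and the $\eta_1/\pi$ terms come from the $x$-independent constants in $\partial_x^2\log\vartheta$).

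The remaining — and genuinely load-bearing — step is to show that the first-derivative (``gradient-squared'') terms cancel: one must verify the identity
\[
-\sum_{j=1}^N(\partial_{x_j}F)^2 + \sum_{k=1}^M(\partial_{y_k}F)^2 + 2(N-M)g\,\text{(cross terms from $\partial_\tau F$)} = 0
\]
up to terms already accounted for. Expanding $\partial_{x_j}F$ as a sum of $\frac{g}{2}\,\zeta$-type logarithmic derivatives $\varphi(x_j-x_k) \equiv \partial_x\log\vartheta(x)|_{x_j-x_k}$ and $-g\,\varphi(x_j-y_k)$, the square produces ``diagonal'' terms $\varphi(x_j-x_k)^2$ (which pair with the second-derivative terms via $\varphi^2 + \varphi' = $ related to $V$) and genuine ``three-body'' cross terms $\varphi(x_j-x_k)\varphi(x_j-x_\ell)$ and $\varphi(x_j-x_k)\varphi(x_j-y_\ell)$ etc. These must cancel by the trigonometric/elliptic identity $\varphi(a-b)\varphi(b-c)+\varphi(b-c)\varphi(c-a)+\varphi(c-a)\varphi(a-b) = $ (sum of single-variable terms) — i.e.\ the classical functional equation for $\zeta$ or $\cot$-type functions underlying the integrability of Calogero-Sutherland systems. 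I expect \emph{this cancellation of three-body terms} to be the main obstacle: it requires carefully organizing the triple sums over the $x$-indices, the $y$-indices, and the mixed $x$--$y$ index triples, and invoking the appropriate elliptic cotangent identity in each case so that everything collapses to the two-body $V$-potential and the constant. Since the lemma and its proof are attributed to \cite{Lan06}, I would either reproduce this computation in full or, more economically, cite \cite{Lan06} for the functional identities and only record here the explicit value \eqref{cNM} of the constant, checking it by the combinatorial factor-counting argument sketched above (and perhaps sanity-checking against the special case $M=0$, where $c_{N0} = g^2 N(N-1)\eta_1/\pi + \frac1{12}g^2 N\cdot N(N-1)$, consistent with the eigenvalue $E_{N_1}$ in \eqref{psiN1=s}).
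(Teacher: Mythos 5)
Your proposal is correct, but it does considerably more work than the paper, whose entire proof of Lemma~\ref{lemma:key} consists of citing \cite{Lan06} together with a dictionary translating notation and one remark: the reference proves the identities for $\Psi_N(x)=\bigl(\prod_{j<k}^N\vartheta(x_j-x_k)\bigr)^g$, which differs from the $\Psi_N$ in \eqref{PsiN1} by a locally constant phase, and since that phase is annihilated by every differential operator in \eqref{gen1}--\eqref{gen} the discrepancy is harmless. Your sketch is an accurate account of what the cited proof actually does: \eqref{gen1} from translation invariance of $K_{NM}$; \eqref{gen} by logarithmic differentiation, the heat-type equation for $\vartheta_1$, the relation $\partial_x^2\log\vartheta(x)=-V(x;\tau)+\text{const}$, and cancellation of the three-body cross terms via the classical functional equation for $\varphi=\vartheta'/\vartheta$ --- and, as you anticipate, that cancellation is the genuinely technical step. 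Note that because you only ever manipulate $\log K_{NM}$ and its derivatives, your setup is automatically insensitive to the phase-convention issue the paper flags. Your closing sanity check is also right: \eqref{cNM} at $M=0$ gives $c_{N0}=g^2N(N-1)\frac{\eta_1}{\pi}+\frac{1}{12}g^2N^2(N-1)$, which is consistent with \eqref{psiN1=s} versus \eqref{psiN1} after evaluating $\sum_{j=1}^N(N+1-2j)^2=N(N^2-1)/3$. What the paper's route buys is brevity and a clean delegation of the hard identity to \cite{Lan06}; what yours buys is a self-contained derivation, and in particular an independent check of the constant $c_{NM}$, which is the only ingredient of the lemma that the rest of the paper needs in explicit form.
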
 

\begin{proof} 
This result is stated and proved in Ref.\ \cite{Lan06}; see first part of the Theorem. 
For the convenience of the reader we give a table translating the symbols used here to the ones used in Ref.\ \cite{Lan06}: 

\begin{equation*}
\label{table}
\begin{array}{|l||c|c|c|c|c|c|c|c|}
\hline
\text{Symbols used here}                &  \vartheta(x)                  & g & \tau & p & \frac{\eta_1}{\pi} & H_N(x) & D_N(x) & K_{NM}(x,y)  \\
\hline 
\text{Symbols used in \cite{Lan06}} & 2\theta\left(x\right) &\lambda &  \frac{\ii\beta}{2{\pi}} & q^2 & c_0 & H_{\lambda,N}(\mathbf{x}) & -P_N(\mathbf{x})  & F_{N,M}(\mathbf{x};\mathbf{y}) \\
\hline
\end{array}
\end{equation*} 
(we ignore an irrelevant multiplicative factor between $K_{NM}(x,y)$ and $F_{N,M}(\mathbf{x};\mathbf{y})$ coming from the factor 2 between $ \vartheta(x)$ and $\theta\left(x\right)$). 

A careful reader will notice that the proof in \cite{Lan06} is for the function $\Psi_{N}(x) $ given by 
\begin{equation*} 
\label{PsiN2} 
\left(\prod_{j<k}^{N} \vartheta\left(x_j - x_k\right)  \right)^{g}
\end{equation*} 
differing from the function $\Psi_{N}(x) $ in \eqref{PsiN1} by a phase factor. 
However, this phase factor is locally constant (it only changes if one crosses branch cuts), and it therefore does not affect the proof of Lemma~\ref{lemma:key} in  \cite{Lan06}. 
\end{proof} 

\subsubsection{Integral operator}
To make the formal integral in \eqref{formal} precise, we replace 
\begin{equation} 
\label{dMy} 
\int\frac{d^My}{(2\pi)^M}\to \int_{[-\pi-\ii\epsilon,\pi-\ii\epsilon]^M}\frac{d^My}{(2\pi)^M} \equiv  \int_{-\pi-\ii\epsilon}^{\pi-\ii\epsilon}\frac{dy_1}{2\pi} \cdots \int_{-\pi-\ii\epsilon}^{\pi-\ii\epsilon}\frac{dy_M}{2\pi}
\end{equation} 
with $\epsilon$ a parameter further constrained below. 
This parameter $\epsilon$ allows us to deform the integration paths in the complex plane so as to avoid singularities of the integrand, 
and by this we can make the computations described after \eqref{formal} mathematically precise. 
By straightforward computations we obtain explicit formulas for the transformed function $\psi_N(x)=(\cK^{QQ'}_{NM}\psi_M)(x)$, 
which is well-defined provided $\psi_M(y)$ satisfies certain conditions. 
We also find that it is convenient to choose the constant in \eqref{formal} as
\begin{equation} 
\label{CNM}
C_{NM} = \ee^{\pi\ii gNM/2} .
\end{equation} 
We summarize our result as follows. 

\begin{lemma} 
\label{lem:1}
Let $g>1/2$, $M,N\in\Z_{\geq 1}$ with $N>M$, 
\begin{equation} 
\label{psiM} 
\psi_M(y;\tau) = \Psi_M(y;\tau)\cP_M(\xi;p)\quad (\xi=(\ee^{\ii y_1},\ldots,\ee^{\ii y_M}),\quad p=\ee^{2\pi\ii\tau}) 
\end{equation}
with $\cP_M(\xi;p)$ an analytic function of $(\xi,p)=(\xi_1,\ldots,\xi_M,p)$ in the following region $\subset\C^{M+1}$,   
\begin{equation} 
\rho_0|p|<|\xi_j|<\rho_0 \quad (1\leq j\leq M),\quad |p|<1
\end{equation} 
for some $\rho_0>0$, and 
\begin{equation} 
\label{QR}
Q=r-Mg/2,\quad Q'=r' - Ng/2
\end{equation} 
with $r,r'\in\mathbb{Z}$. 
Then the integral transform $(\cK^{QQ'}_{NM}\psi_M)(x;\tau)$ of $\psi_M(y;\tau)$ in \eqref{formal} becomes well-defined 
by the substitution in {\eqref{dMy}} provided that $\rho\equiv\ee^\epsilon$ is in the range $|p|\rho_0<\rho<\rho_0$. 
Furthermore, 
\begin{equation} 
\label{precise1}
(\cK^{QQ'}_{NM}\psi_M)(x;\tau) = \Psi_N(x;\tau) (\hat{\cK}^{rr'}_{NM}\cP_M)(z;p)  
\end{equation} 
with $z=(\ee^{\ii x_1},\ldots,\ee^{\ii x_N})$ {where}
\begin{equation}
\label{precise2}  
(\hat\cK^{rr'}_{NM}\cP_M)(z;p) =  
\left( \prod_{j=1}^N z_j^r\right) 
\left(\prod_{k=1}^M\oint_{|\xi_k|=\rho}\frac{d\xi_k}{2\pi\ii\xi_k}\xi_k^{-r'} \right)\frac{\prod_{j\neq k}^{M}\theta(\xi_j/\xi_k;p)^g}{\prod_{j=1}^N\prod_{k=1}^M\theta\left(z_j/\xi_k;p \right)^g}\cP_M(\xi;p)
\end{equation} 
is an analytic function of $(z , p) = ( z_1, \ldots, z_N, p)$ in the following region $\subset\C^{N+1}$, 
\begin{equation} 
\label{zjregion}
|p|\rho<|z_j|<\rho \quad (1\leq j\leq N),\quad |p|<1. 
\end{equation}  
Finally, if $\psi_M(y;\tau)$ satisfies the equations in \eqref{HMyeq1} with known eigenvalues $d_M$ and $E_M$, then 
$\psi_N(x;\tau)=(\cK^{QQ'}_{NM}\psi_M)(x;\tau)$ satisfies the equations in  \eqref{HMyeq2} with eigenvalues $d_N$ and $E_N$ given in 
\eqref{dNEN}, \eqref{cNM} and \eqref{QR}.   
\end{lemma}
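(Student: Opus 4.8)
The statement splits into two essentially independent parts, which I would treat in turn: the explicit evaluation \eqref{precise1}--\eqref{precise2} together with the well-definedness and analyticity claims, and the verification of the differential equations in \eqref{HMyeq2}. For the first part I would start from \eqref{formal} after the substitution \eqref{dMy} and change variables $\xi_k=\ee^{\ii y_k}$: writing $y_k=t_k-\ii\epsilon$ with $t_k\in[-\pi,\pi]$ maps the segment $[-\pi-\ii\epsilon,\pi-\ii\epsilon]$ to the counterclockwise circle $|\xi_k|=\rho$ with $\rho=\ee^{\epsilon}$, so that $\int_{-\pi-\ii\epsilon}^{\pi-\ii\epsilon}\frac{dy_k}{2\pi}(\cdots)=\oint_{|\xi_k|=\rho}\frac{d\xi_k}{2\pi\ii\xi_k}(\cdots)$. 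The key algebraic input is the identity $\vartheta(x)=\ii\,\ee^{-\ii x/2}\theta(\ee^{\ii x};p)$ between \eqref{vtet} and \eqref{theta}, which gives
\[
\Bigl(\prod_{j=1}^{N}\prod_{k=1}^{M}\vartheta(x_j-y_k)\Bigr)^{-g}=\ii^{-gNM}\Bigl(\prod_{j=1}^{N}z_j^{gM/2}\Bigr)\Bigl(\prod_{k=1}^{M}\xi_k^{-gN/2}\Bigr)\frac{1}{\prod_{j=1}^{N}\prod_{k=1}^{M}\theta(z_j/\xi_k;p)^{g}}\,;
\]
combined with $K_{NM}(x,y)\psi_M(y)=\Psi_N(x)\Psi_M(y)^2\bigl(\prod_{j,k}\vartheta(x_j-y_k)\bigr)^{-g}\cP_M(\xi;p)$ and $\Psi_M(y)^2=\prod_{j\neq k}^{M}\theta(\xi_j/\xi_k;p)^{g}$ (cf.\ \eqref{KNM}, \eqref{PsiN1}), this turns the right-hand side of \eqref{formal} into $\Psi_N(x)$ times the integrand of \eqref{precise2}, up to the scalars $\ii^{-gNM}$, $\prod_j z_j^{gM/2}$ and $\prod_k\xi_k^{-gN/2}$. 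At this point the choice $C_{NM}=\ee^{\pi\ii gNM/2}$ in \eqref{CNM} cancels $\ii^{-gNM}$ exactly, and the exponents \eqref{QR} are tailored so that $\ee^{\ii Q\sum_j x_j}\prod_j z_j^{gM/2}=\prod_j z_j^{r}$ and $\ee^{-\ii Q'\sum_k y_k}\prod_k\xi_k^{-gN/2}=\prod_k\xi_k^{-r'}$; this produces \eqref{precise1}--\eqref{precise2}.

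For well-definedness and analyticity I would note that the integrand of \eqref{precise2} is a product and quotient of functions analytic in $(z,\xi,p)$ wherever it is defined, the only candidate singularities being: zeros of $\theta(z_j/\xi_k;p)$ in the denominator, which occur exactly when $z_j/\xi_k\in p^{\Z}$, i.e.\ when $|z_j|=\rho|p|^{m}$ for some $m\in\Z$, and are excluded by \eqref{zjregion}; the singularities of $\cP_M(\xi;p)$, avoided by demanding $\rho_0|p|<\rho<\rho_0$ so that the contour lies inside the annulus of analyticity of $\cP_M$; and the branches of $(\cdot)^{g}$. For the last, $\theta(w;p)$ is nonvanishing with winding number zero in the annulus $|p|<|w|<1$, so $\theta(w;p)^{g}$ is single-valued and analytic there, which handles the denominator; the numerator factors $\prod_{j\neq k}^{M}\theta(\xi_j/\xi_k;p)^{g}$ are organised through $\Psi_M(y)^2$, which on the real-shifted $y$-contour is a product of nonnegative quantities $\vartheta(y_j-y_k)^{2}$ raised to the power $g$, hence continuous and vanishing like $|y_j-y_k|^{g}$ at coincidences. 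Thus the integrand is continuous on the compact integration contour and jointly analytic in $(z,p)$ throughout \eqref{zjregion}; uniform convergence of the infinite products defining $\theta$ and $\vartheta$ for $|p|\le1-\delta$ then licenses differentiation under the integral sign (equivalently, an application of Morera's theorem together with Fubini), yielding analyticity of \eqref{precise2} on the stated region.

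For the differential equations I would exploit the factorization $\cK^{QQ'}_{NM}=\ee^{\ii Q\sum_j x_j}\,\cK^{00}_{NM}\,\bigl(\ee^{-\ii Q'\sum_k y_k}\,\cdot\,\bigr)$. By the elementary rule \eqref{Qtransform} — applied on the $M$-variable side to the $(-Q')$-twist of $\psi_M$, and on the $N$-variable side to the final $Q$-twist — it suffices to prove that the untwisted operator $\cK^{00}_{NM}\psi=C_{NM}\int K_{NM}(x,y)\psi(y)\,d^My$ sends a solution of \eqref{HMyeq1} with eigenvalues $(d_M,E_M)$ to a solution of \eqref{HMyeq2} with eigenvalues $(d_M,E_M+c_{NM})$, and then to check that the three eigenvalue shifts combine to reproduce \eqref{dNEN}. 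To analyse $\cK^{00}_{NM}\psi$ I would pull the differential operators under the integral using Lemma~\ref{lemma:key}: for $D_N(x)$ one replaces $D_N(x)K_{NM}$ by $-D_M(y)K_{NM}$ via \eqref{gen1}, integrates by parts once in $y$, and uses $D_M(y)\psi=d_M\psi$; for $kg\frac{\ii}{\pi}\partial_\tau+H_N(x)$, noting $kg=(N-M)g$, one uses \eqref{gen} in the form $\bigl(kg\frac{\ii}{\pi}\partial_\tau+H_N(x)\bigr)K_{NM}=\bigl(H_M(y)+c_{NM}\bigr)K_{NM}$, integrates by parts in $y$ to transfer the second-order operator $H_M(y)$ onto $\psi$, and uses $\bigl(kg\frac{\ii}{\pi}\partial_\tau+H_M(y)\bigr)\psi=E_M\psi$.

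The argument rests on two technical points, which I expect to make up the substance of Appendix~\ref{app:Details}. First, all boundary terms produced by the integrations by parts must vanish, both those at the endpoints $y_k=\pm\pi-\ii\epsilon$ and those at coincidences $y_j=y_l$. The endpoint terms vanish because the full integrand is $2\pi$-periodic in each $y_k$: under $y_k\mapsto y_k+2\pi$ the factor $\prod_j\vartheta(x_j-y_k)^{-g}$ picks up $\ee^{-\ii\pi gN}$ (since $\vartheta(x+2\pi)=-\vartheta(x)$), the twist $\ee^{-\ii Q'\sum_k y_k}$ with $Q'=r'-Ng/2$ picks up $\ee^{-2\pi\ii Q'}=\ee^{-2\pi\ii r'}\ee^{\pi\ii gN}=\ee^{\pi\ii gN}$ — here $r'\in\Z$ is essential — while $\Psi_M(y)^2$ (even powers of $\vartheta$) and $\cP_M(\ee^{\ii y};p)$ are genuinely periodic, so these two monodromies cancel; likewise $r\in\Z$ makes the output single-valued in the $z_j$. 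Second, the integrations by parts against the non-smooth kernel must be justified: near a coincidence $y_j=y_l$ both $K_{NM}$ and $\psi$ vanish like $|y_j-y_l|^{g}$, and the leading $|y_j-y_l|^{g-2}$ contributions to $H_M(y)K_{NM}$ coming from $-\partial^2$ and from the potential term $2g(g-1)V(y_j-y_l)$ in \eqref{HN} cancel against one another — this is precisely the cancellation underlying Lemma~\ref{lemma:key} — so that $H_M(y)K_{NM}$ is integrable and the interior boundary contributions, which are $\sim|y_j-y_l|^{2g-1}$, vanish in the limit exactly because $g>1/2$. (A minor point: when differentiating in $\tau$ one must keep $\epsilon$ in the $\tau$-dependent window $\log(\rho_0|p|)<\epsilon<\log\rho_0$; fixing $\epsilon$ in a neighbourhood of a given $\tau$ and then invoking the analyticity established above makes the resulting identities globally valid.) I would regard this second point — the legitimacy of carrying $H_M(y)$ across the kernel, where the $\wp$-interaction and the assumption $g>1/2$ genuinely enter — as the main obstacle.
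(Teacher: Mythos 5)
Your proposal is correct and follows essentially the same route as the paper's proof: the same change of variables with $\vartheta(x)=\ee^{\pi\ii/2}\ee^{-\ii x/2}\theta(\ee^{\ii x};p)$ to obtain \eqref{precise1}--\eqref{precise2}, boundedness of the integrand plus differentiation under the integral for analyticity, and the kernel identities \eqref{gen1}--\eqref{gen} combined with integration by parts whose boundary terms vanish by periodicity (where $r'\in\Z$ enters) and by the $C^1$/integrability of $\left(\prod_{j<k}\sin^2\frac{y_j-y_k}{2}\right)^g$ at coincidences (where $g>1/2$ enters). The only cosmetic difference is that you factor the twists out via \eqref{Qtransform} and track the eigenvalue shifts in three stages rather than carrying $Q,Q'$ through the integrations by parts as the paper does; the resulting arithmetic reproduces \eqref{dNEN} either way.
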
 

\begin{proof} 
Evaluating \eqref{formal}, using \eqref{KNM}, \eqref{dMy} and \eqref{psiM}, we obtain
\begin{equation*} 
(\cK^{QQ'}_{NM}\psi_M)(x) =
C_{NM}\Psi_N(x)  \ee^{\ii Q \sum_{j=1}^N x_j}
\int_{[-\pi-\ii\epsilon,\pi-\ii\epsilon]^M}\frac{d^My}{(2\pi)^M} \frac{\Psi_M(y)^{2} \ee^{-\ii Q'\sum_{j=1}^M y_j}}{{\prod_{j=1}^N\prod_{k=1}^M\vartheta\left(x_j-y_k\right)^g}}\cP_M(\xi). 
\end{equation*} 
We change variables to $z_j=\ee^{\ii x_j}$ and $\xi_k=\ee^{\ii y_k}$ and recall the definition of $\Psi_N(x)$ in \eqref{PsiN}. 
Inserting  $\vartheta(x_j-y_k) = \ee^{\pi\ii/2}\ee^{-\ii(x_j-y_k)/2}\theta(z_j/\xi_k)$, which follow from the definitions in \eqref{theta} and \eqref{vtet} for $|p|<|z_j/\xi_k|<1$,  
we obtain, by straightforward computations, 
\begin{multline*} 
(\cK^{QQ'}_{NM}\psi_M)(x) = \Psi_N(x) (z_1\cdots z_N)^{Q+Mg/2}
C_{NM} \left(\ee^{\pi\ii/2} \right)^{-gNM}
 \\ \times 
 \oint_{|\xi_1|=\rho}\frac{d\xi_1}{2\pi\ii\xi_1}\cdots  \oint_{|\xi_M|=\rho}\frac{d\xi_M}{2\pi\ii\xi_M}
 (\xi_1\cdots \xi_M)^{-(Q'+Ng/2)}
\frac{\prod_{j\neq k}^M \theta(\xi_j/\xi_k)^g}{\prod_{j=1}^N\prod_{k=1}^M\theta(z_j/\xi_k)^g}
\cP_M(\xi)
\end{multline*}
with $\rho=\ee^{\epsilon}$. We insert $C_{NM}$ in \eqref{CNM} and $Q,Q'$ in \eqref{QR} to obtain \eqref{precise1}--\eqref{precise2}.  

The proof that the integral transform in \eqref{precise2} defines a analytic function is deferred to Appendix~\ref{app:Analyticity}. 

We are left to prove that, if $\psi_M(y)$ satisfies \eqref{HMyeq1}, then  $\psi_N(x)$ in \eqref{formal}
satisfies  \eqref{HMyeq2}. To simplify notation, we write in the following $\int$ short for $\int_{[-\pi-\ii\epsilon,\pi-\ii\epsilon]^M}$, 
and we use the short hand notations $|x|=\sum_j x_j$ and similarly for $|y|$. 
We start with
\begin{multline*} 
D_N(x)\psi_N(x) = D_N(x) C_{NM}\ee^{\ii Q|x|}\int \frac{d^My}{(2\pi)^M} K_{NM}(x,y)\ee^{-\ii Q'|y|}\psi_M(y) 
\\
= C_{NM}\ee^{\ii Q|x|}\int \frac{d^My}{(2\pi)^M} \left(D_N(x) +NQ \right) K_{NM}(x,y)\ee^{-\ii Q'|y|}\psi_M(y) \\
= C_{NM}\ee^{\ii Q|x|}\int \frac{d^My}{(2\pi)^M} \left\{ \left(-D_M(y) +NQ \right) K_{NM}(x,y)\right\} \ee^{-\ii Q'|y|}\psi_M(y), 
\end{multline*}
using \eqref{gen1} in the last equality, where the interchange of integration and differentiations in the second equality is 
justified by  the Leibniz integral rule\footnote{See e.g.\ \cite[Section~4.2]{WW} for a precise mathematical formulation of the Leibniz integral rule. 
Note that, in all cases where we use this rule in this paper, the pertinent function in the integrand satisfies analyticity properties that are much stronger than what is needed to prove the validity of this rule.} 
using the properties of the integrand established above (we use $\{\cdots\}$ to indicate that derivatives act only within this bracket, here and in the following). 
We compute this by partial integrations using that the boundary term vanishes, i.e., 
\begin{equation} 
\label{BT11} 
\int \frac{d^My}{(2\pi)^M}\biggl(  \{ D_M(y)K_{NM}(x,y)\} \ee^{-\ii Q'|y|}\psi_M(y) + K_{NM}(x,y) D_M(y)\ee^{-\ii Q'|y|}\psi_M(y)  \biggr) = 0
\end{equation} 
(this is proved in  Appendix~\ref{app:BT}). We thus obtain 
\begin{multline*} 
D_N(x)\psi_N(x) 
= C_{NM}\ee^{\ii Q|x|}\int \frac{d^My}{(2\pi)^M} K_{NM}(x,y)  \left(D_M(y) +NQ \right) \ee^{-\ii Q'|y|}\psi_M(y)\\
= C_{NM}\ee^{\ii Q|x|}\int \frac{d^My}{(2\pi)^M} K_{NM}(x,y)  \ee^{-\ii Q'|y|}\left(D_M(y) +NQ -MQ'\right) \psi_M(y)
\\ = (d_M +NQ-MQ')\psi_N(x) 
\end{multline*}
using $D_N(y)\psi_M(y)=d_M\psi_M(y)$ in the last step. This proves the first identity in \eqref{HMyeq2} with $d_N$ in \eqref{dNEN}.

In a similar manner, 
\begin{multline*} 
\left(k g\frac{\ii}{\pi}\frac{\partial}{\partial\tau} + H_N(x)\right)\psi_N(x) = 
C_{NM}\ee^{\ii Q|x|}\int \frac{d^My}{(2\pi)^M}\biggl(k g\frac{\ii}{\pi}\frac{\partial}{\partial\tau}+ H_N(x) \\ + 2QD_N(x) +NQ^2
\biggr) K_{NM}(x,y) \ee^{-\ii Q'|y|}\psi_M(y) 
 = C_{NM}\ee^{\ii Q|x|}\int \frac{d^My}{(2\pi)^M} \Biggl[ \biggl\{\biggl(H_M(y)+c_{NM} \\ - 2QD_M(y) +NQ^2
\biggr)  K_{NM}(x,y)\biggr\} \ee^{-\ii Q'|y|}\psi_M(y) 
+ K_{NM}(x,y)k g\frac{\ii}{\pi}\frac{\partial}{\partial\tau}\ee^{-\ii Q'|y|}\psi_M(y)  \Biggr] 
\end{multline*}
using the Leibniz integral rule and \eqref{gen1}--\eqref{gen} to get the second identity (recall that $k=N-M$). 
Again, we proceed by partial integrations using that that the boundary terms which arise vanish, i.e., \eqref{BT11} and 
\begin{equation} 
\label{BT22} 
\int \frac{d^My}{(2\pi)^M} \biggl( \left\{ H_M(y) K_{NM}(x,y)\right\}\ee^{-\ii Q'|y|}\psi_M(y) 
- K_{NM}(x,y)H_M(y) \ee^{-\ii Q'|y|}\psi_M(y) \biggr)=0 
\end{equation}  
(this is proved in  Appendix~\ref{app:BT}). Thus, 
\begin{multline*} 
\left(k g\frac{\ii}{\pi}\frac{\partial}{\partial\tau} + H_N(x)\right)\psi_N(x) = 
 C_{NM}\ee^{\ii Q|x|}\int \frac{d^My}{(2\pi)^M}K_{NM}(x,y)
\biggl( k g\frac{\ii}{\pi}\frac{\partial}{\partial\tau} + H_M(y) +c_{NM}\\
+2QD_M(y) + NQ^2  \biggr) \ee^{-\ii Q'|y|}\psi_M(y) = 
 C_{NM}\ee^{\ii Q|x|}\int \frac{d^My}{(2\pi)^M}K_{NM} \ee^{-\ii Q'|y|}
 \biggl( k g\frac{\ii}{\pi}\frac{\partial}{\partial\tau} + H_M(y) \\
-2Q' D_M(y) +M(Q')^2 +c_{NM}+2 Q D_M(y) - 2MQQ' + NQ^2 \biggr)\psi_M(y) \\
= (E_M +2(Q-Q')d_M +M(Q')^2 -2MQQ' +  NQ^2 + c_{NM})\psi_N(x)
\end{multline*}
using \eqref{HMyeq1} in the last step. This proves the second identity in \eqref{HMyeq2} with $E_N$ in \eqref{dNEN}.
\end{proof} 

\subsubsection{Induction}
We are now ready to prove Theorem~\ref{thm} by induction over $L$. 

For $L=1$ we need to prove that 
\begin{equation} 
\label{psiN1}
\begin{split}
\psi_{N_1}(x) &= \Psi_{N_1}(x)\left( \prod_{j=1}^{N_1} z_j^{r_1}\right) ,\quad d_{N_1} = N_1r_1,\\
E_{N_1} &= \sum_{j=1}^{N_1}\bigl( r_1 + \half g ( N_1 + 1 - 2 j) \bigr)^{2} + g^2N_1(N_1-1)\left(\frac{\eta_1}{\pi}-{\frac{1}{12}}\right) 
\end{split}
\end{equation} 
solve \eqref{HMyeq2} for $N=N_1$. 
For $r_1=0$ this follows from Lemma~\ref{lemma:key}  in the special case $N=N_1$, $M=0$ (this is non-trivial only in case $N_1=k>1$). 
The general case of non-zero $r_1$ follows from this by a simple computation explained in Section~\ref{sec:outline}; 
see \eqref{Qtransform} (note that $\prod_{j}z_j^{r_1} = \exp(\ii r_1 \sum_j x_j)$, and that $E_{N_1}$ in \eqref{psiN1=s} and \eqref{psiN1} agree).

For $L>1$ one can construct the function $\psi_{r,s,L}(x_1,\ldots,x_n;\tau)$ in Theorem~\ref{thm} inductively using the integral operator in \eqref{precise1}--\eqref{precise2} as follows: 
$\psi_{\lambda}(x)=\psi_{N_L}(x)$ with 
\begin{equation} 
\label{iterate}
\psi_{N_{a}}(x) = (\cK^{Q_aQ'_a}_{N_aN_{a-1}}\psi_{N_{a-1}})(x), \quad Q_a=r_a-N_{a-1}\frac{g}{2},\quad Q'_a=r_a-N_{a}\frac{g}{2} 
\end{equation} 
for $a=2,3,\ldots,L$, with $\psi_{N_1}(x)$ in \eqref{psiN1}. 
Then the repeated application of Lemma~\ref{lem:1} guarantees that the function $\psi_{N_{a}}(x)$ has the claimed properties.  

We are left to show that $d_{\lambda}=d_{N_L}$ and $E_{\lambda}=E_{N_L}$ where 
\begin{subequations} 
\begin{equation}
\label{ppNa} 
d_{N_a}=d_{N_{a-1}}+ N_aQ_a-N_{a-1}Q_a', 
\end{equation} 
\begin{equation} 
\label{ENa}
E_{N_a} = E_{N_{a-1}} + 2(Q_a-Q'_a)d_{N_{a-1}} +N_{a-1}(Q'_a)^2-2N_{a-1}Q_aQ_a' +  N_aQ_a^2 + c_{N_aN_{a-1}}
\end{equation} 
\end{subequations} 
for $a=2,3,\ldots,L$. This can be verified by straightforward computations (the interested reader can find details in Appendix~\ref{app:computation}).

\section{Concluding remarks}
We give some outlook and conjectures. 

\label{sec:Remarks}
\noindent {\bf 1.} As explained in more detail below, results in \cite{S19,AL18,EK94b,EFK95,Lan07,Lan14} suggest to us that, for arbitrary $n\in\Z_{\geq 1}$, $g\in\C$, $\kappa\in\C\setminus\{0\}$ and $\lambda\in\C^n$, 
there exist functions $P^{(1/g)}_{\lambda;\kappa,n}(z;p)$ such that $\Psi_n(z;p)P^{(1/g)}_{\lambda;\kappa,n}(z;p)$ is a solution of the non-stationary eCS equation in \eqref{Eq:Main1} with eigenvalues as on the RHS in \eqref{Ethm}, and
\begin{equation*} 
\sum_{j=1}^n z_j\frac{\partial}{\partial z_j} P^{(1/g)}_{\lambda;\kappa,n}(z;p) = |\lambda|P^{(1/g)}_{\lambda;\kappa,n}(z;p),\quad |\lambda|\equiv \sum_{j=1}^n \lambda_j . 
\end{equation*} 
These functions are analytic functions of $(z,p)$ and $(\lambda,\kappa)$ in suitable regions $\subset \C^{n+1}$ and, for partitions $\lambda$, they are symmetric functions in $z$ that reduce to  the Jack polynomials $P^{(1/g)}_{\lambda,n}(z)$ in the limit $p\to 0$. 
We believe that there is a beautiful mathematical theory of this two-parameter deformation of the Jack polynomials which, at this point, is only partly developed. 
The result in the present paper is a simple special case or this.

We now discuss in more detail the reasons for these expectations. 
\begin{itemize} 
\item[$(i)$] It seems that the {\em affine analogue of the Jack polynomials} defined and studied in \cite{EK94b,EFK95} are special cases of these functions $P^{(1/g)}_{\lambda;\kappa,n}(z;p)$.
\item[$(ii)$] We constructed series representations of such functions $P^{(1/g)}_{\lambda;\kappa,n}(z;p)$ for the simplest non-trivial case $n=2$ in \cite{AL18}.\footnote{The results in \cite{AL18} are actually for the non-stationary Heun equation, but this includes as special case the non-stationary Lam\'e equation which, essentially, is the non-stationary eCS equation for $n=2$.}  
\item[$(iii)$]  A construction of singular series solutions (in the sense of \cite{Lan07}) of the non-stationary eCS equation for $\kappa=0$ and arbitrary $n\in\Z_{\geq 2}$ by one of us in \cite{Lan14}, which is straightforward to extend this construction to $\kappa\neq 0$. In fact, this construction is easier for non-zero $\kappa$ \cite{AL18}: first, if $\kappa\neq 0$, one need not compute the eigenvalue $E(\tau)$ in \eqref{Eq:Main1} and, second, if $\Im(\kappa)\neq 0$, it is easy to prove absolute convergence of these singular series solutions in a suitable domain $\subset \C^{n+1}$ using the method in \cite{Lan14}.
 Moreover, using the kernel functions in \eqref{KNM}, one can transform these singular solutions to such described in the previous paragraph (this is explained in \cite{Lan07}), but the latter is restricted to $\kappa=k g$ with $k\in\Z$ and, for $k>0$, these solutions are incomplete (for $k<0$ they are over-complete). 
\item[$(iv)$] Our main reason are Shiraishi's non-stationary eCS functions $f^{\mathrm{eCS}}(z;p|\lambda;\kappa|g)$ in \cite[Section~7]{S19} mentioned already in the introduction. 
 As Shiraishi conjectured, these functions are such that\footnote{Note that our $z$, $\kappa$ and $g$ correspond to $x$, $-k$ and $\beta$ in \cite[Section~7]{S19}, respectively.} 
  \begin{equation} 
 \label{eCSfunction} 
 z_1^{\lambda_1}\cdots  z_n^{\lambda_n}  f^{\mathrm{eCS}}(z;p|\lambda;-\kappa|g)
 \end{equation} 
have exactly the properties of the functions $P^{(1/g)}_{\lambda;\kappa,n}(z;p)$ described in the first paragraph of this section. 
\end{itemize} 

\noindent {\bf 2.} It would be interesting to prove that the elliptic generalizations of the Jack polynomials constructed in the present paper  are identical with the functions in \eqref{eCSfunction} in special cases. 
Since both this functions are given by explicit formulas, this non-trivial test of Shiraishi's conjecture in \cite[Section~7]{S19} seems feasible. 

\noindent {\bf 3.} It is interesting to speculate what the generalizations of the properties in \eqref{orth}--\eqref{shift} of the Jack polynomials to the functions $P^{(1/g)}_{\lambda;\kappa,n}(z;p)$ described in Remark~{\bf 1} above are. We believe that the result in the present paper suggest the following (we stress that the statements in the following three equations are conjectures). 

First, the natural elliptic generalization of the scalar product in \eqref{product} is 
\begin{equation*} 
\label{product2} 
\langle f,g\rangle'_{n} \equiv \frac1{n!}\oint_{|z_1|=1}\frac{dz_1}{2\pi\ii z_1}\cdots  \oint_{|z_n|=1}\frac{dz_n}{2\pi\ii z_n}\left(\prod_{j\neq k}^n\theta(1-z_j/z_k;p) \right)^g f(z;p)\overline{g(z;\bar{p})}, 
\end{equation*} 
and we expect that the natural generalization of the orthogonality relations in \eqref{orth}  is 
\begin{equation*} 
\label{orth2}
 \left\langle P^{(1/g)}_{\lambda;\kappa,,n},P^{(1/g)}_{\mu;-\bar{\kappa},n}\right\rangle'_{n} = \delta_{\lambda\mu}\cN_{\lambda;\kappa,n}(p,g)
\end{equation*} 
for some $\cN_{\lambda;\kappa,n}(p,g)$ that may depend on $p$ and $\kappa$ (this is suggested to us by the quantum mechanical interpretation of the model, using some quantum mechanics folklore). 
Second, we expect the following generalization of the generating function in \eqref{GenerateJack} holds true, 
\begin{equation*} 
\label{GenerateJack2}
\prod_{j=1}^n\prod_{k=1}^m\frac1{\theta(z_j/\xi_k;p)^g} = \sum_{\lambda} \frac1{b_{\lambda;n,m}(p,g)}P^{(1/g)}_{\lambda;(n-m)g,n}(z;p)P^{(1/g)}_{\lambda;(m-n)g,m}(1/\xi;p)\quad (m\leq n)
\end{equation*} 
where the sum is over all integer vectors $\lambda\in{\Z^m}$, and the constants $b_{\lambda;n,m}(p,g)$ may depend on $n$, $m$ and $p$ 
(this, it seems to us, is the unique equation consistent with all known facts). 
Third, it is natural to expect that the Pieri relation in \eqref{shift} remains unchanged,  
\begin{equation*} 
\label{shift2}
(z_1\cdots z_n)^r P^{(1/g)}_{\lambda;\kappa,n}(z;p)= P^{(1/g)}_{\lambda+(r^n);\kappa,n}(z;p)\quad (\lambda\in\PP_n, r\in\Z).
\end{equation*} 
It would be interesting to prove these relations since they would allow to prove the results in the present paper by the very same argument used in  Appendix~\ref{app:Jack} for $p=0$. 

\noindent {\bf 4.} As discussed in Remark~{\bf 3.} above, we believe that the functions $P^{(1/g)}_{\lambda;\kappa,,n}$ are an orthogonal system with respect to a natural scalar product $\langle\cdot,\cdot\rangle_n'$ exactly if $p$ is real and $\kappa$ purely imaginary, $\bar p=p$ and $\bar\kappa=-\kappa$ (we assume $g>0$). 
This is exactly the case when the equation in \eqref{Eq:Main} has a natural physical interpretation as non-stationary Schr\"odinger equation with  $\langle\cdot,\cdot\rangle_n'$ the standard $L^2$-scalar product  playing an important role in the quantum mechanical interpretation of this equation. 
To us, this is strong reason to think about the equation in \eqref{Eq:Main} as Schr\"odinger equation (rather than heat equation), even though the results in the present paper are for real $\kappa$ values.

\noindent {\bf 5.} 
All known CMS-like systems allow for deformations in the sense of \cite{CFV98}, 
and the non-stationary eCS equation is no exception: its deformation is given by 
\begin{equation*} 
\left(  \kappa\frac{\ii}{\pi} \frac{\partial}{\partial\tau}  + H_{n,\tilde{n}}(x,\tilde{x}) \ \right) \psi(x,\tilde{x}) = E\psi(x,\tilde{x})
\end{equation*} 
with the deformed eCS Hamiltonian 
\begin{multline*} 
H_{n,\tilde{n}}(x,\tilde{x})\equiv  -\sum_{j=1}^{n}\frac{\partial^2}{\partial x_j^2} + \sum_{j<j'}^{n} 2g(g-1)V(x_j- x_{j'}) \\
 +g\sum_{k=1}^{\tilde{n}}\frac{\partial^2}{\partial {\tilde x}_k^2} + \sum_{k<k'}^{\tilde{n}} \frac{2(g-1)}{g}V(\tilde{x}_k- \tilde{x}_{k'}) + \sum_{j=1}^n\sum_{k=1}^{\tilde{n}} 2(1-g) V(x_j- \tilde{x}_k)
\end{multline*} 
depending to two arbitrary variable numbers $n$ and $\tilde{n}$ (we suppress the $\tau$-dependence). 
This operator is known to define a quantum integrable system (this is proved in \cite{K05} for the case $\tilde{n}=1$, but the result is expected to be true for arbitrary $\tilde{n}$). 
Moreover, generalized kernel functions $K_{N,\tilde{N},M,\tilde{M}}(x,\tilde{x},y,\tilde{y})$ for this non-stationary deformed eCS equation are known \cite{Lan10}: 
they obey relations as in \eqref{gen} but for deformed eCS operators $H_{N,\tilde{N}}(x,\tilde{x})$ and $H_{M,\tilde{M}}(y,\tilde{y})$, and $\kappa=(N-M)g$ is replaced by $(N-M)g-\tilde{N}+\tilde{M}$. 
We expect that, using these generalized kernel functions, the construction in the present paper can be generalized to find solutions of the deformed non-stationary eCS equation.  
We leave this for future work. 

It is possible to generalize many results about the Jack polynomials in a natural way to the deformed case \cite{SV05,AL17,AHL18}. 
We thus believe that there is a natural generalization of the theory described in Remark~{\bf 1.} above to the deformed case as well. 
 
\noindent {\bf 6.} We believe that Theorem~\ref{thm} holds true for complex $g$ with real part $\Re(g)>0$. It would be interesting to prove this. 

\appendix
\section{Special functions}
\label{app:elliptic} 
For the convenience of the reader we collect the well-known definitions and properties of special functions that we use (we follow Whittaker \& Watson \cite{WW}).

The Weierstrass $\wp$-function with periods $(2\omega_1,2\omega_2)$ is defined as 
\begin{equation*}
\label{wpdef} 
 \wp (x|\omega_1,\omega_2)\equiv \frac{1}{x^2}+  \sum_{(m,n)\in \Z \times \Z \setminus \{ (0,0)\} } \left( \frac{1}{(x-\Omega_{m,n})^2}-\frac{1}{\Omega_{m,n}^2}\right)
\end{equation*} 
with $\Omega_{m,n}\equiv2m\omega_1 +2n\omega_2$ and $\Im(\omega_2/\omega_1)>0$. 
Moreover,
\begin{equation*}
\label{theta1def} 
\vartheta _1(x|\tau) \equiv 2\sum _{n=0}^{\infty } (-1)^{n} \ee^{\pi\ii \tau (n+1/2)^2}\sin\left((2n+1)x\right)
\end{equation*} 
with $\tau\equiv\omega_2/\omega_1$ defines the corresponding Jacobi theta function (we set $\omega_1=\pi$ in the main text). 

We quote two well-known results that we use \cite{WW}. 
First, 
\begin{equation*} 
\label{tildewp}
\wp(z|\omega_1,\omega_2) =  -\frac{\eta_1}{\omega_1} +\frac{\pi^2}{4\omega_1^2} \sum_{n\in\mathbb{Z}} \frac{1}{\sin^2(\frac{\pi (z+2n\omega_2)}{2\omega_1})} 
\end{equation*} 
with 
\begin{equation*} 
\label{eta1gen} 
\frac{\eta_1}{\omega_1} = \frac{\pi^2}{\omega_1^2}\left(\frac1{12} +\sum_{n=1}^\infty \frac1{2\sin^2(\frac{n\pi\omega_2}{\omega_1})} \right) , 
\end{equation*} 
and second, 
\begin{equation*} 
\label{theta1}
\vartheta_1(x|\tau) = 2p^{1/8} G \sin(x)\prod_{m=1}^\infty(1-2p^{m}\cos(2x)+p^{2m})
\end{equation*} 
with $G\equiv\prod_{m=1}^{\infty}(1-p^m)$ and $p=\ee^{2\pi\ii\tau}$ (note that our $p$ is $q^2$ in \cite{WW}).
These results imply the identities in \eqref{wp}--\eqref{eta1}, and that the function $\vartheta(x)$ in \eqref{vtet} equals $\vartheta_1(\half x|\tau)/p^{1/8} G $.

\section{Proof of Proposition~\ref{thm:AMOS}}
\label{app:Jack}
We give a concise proof of Proposition~\ref{thm:AMOS}. It is similar to the proof in \cite[Section~6]{AMOS95}, 
but we emphasize technical details allowing to use circles as integration contours even for non-integer $g$, 
different from \cite{AMOS95} (see Remark~\ref{rem:contour}). 

The identity in \eqref{GenerateJack} is in the sense of formal power series. 
However, we need information about convergence of the series on the RHS in \eqref{GenerateJack}.  
To get this we observe that the LHS in \eqref{GenerateJack} can be expanded in a Taylor series in the variables $z_j$ and $1/\xi_k$ using the binomial theorem repeatedly, 
and this series is absolutely and uniformly convergent in the region 
\begin{equation*} 
\label{region} 
|z_j/\xi_j|<\epsilon\quad \forall j,k, 
\end{equation*} 
for arbitrary $\epsilon$ in the range $0<\epsilon<1$.  
Thus, in such a region, we can re-sum this Taylor series so as to get the series on the RHS in \eqref{GenerateJack}, and the interchange of summation and integration below is justified. 

The second analytic fact we need is that the radius of the integration contours in the scalar product in \eqref{product} can be changed from $1$ to an arbitrary value $\rho>0$ without changing the result, i.e., 
\begin{equation} 
\label{product1} 
\langle f,g\rangle'_{n} \equiv \frac1{n!}\left(\prod_{j=1}^n\oint_{|z_j|=\rho}\frac{dz_j}{2\pi\ii z_j}\right) \left(\prod_{j\neq k}^n(1-z_j/z_k) \right)^g f(z)\overline{g(1/\bar{z})}\quad (f,g\in\Lambda_n)
\end{equation} 
is independent of $\rho>0$ (a concise proof is given at the end of this Appendix for the convenience of the reader, and more detailed discussion can be found in, e.g., \cite{AHL18}).
Moreover, for $g\in\Lambda_n$ with real coefficients, $\overline{g(1/\bar{z})}=g(1/z)$, and the Jack polynomials have this property.

To prove Proposition~\ref{thm:AMOS} we use the following identity implied by \eqref{GenerateJack}  and \eqref{shift}, 
\begin{equation*} 
\frac{\left(\prod_{j=1}^n z_j^r\right)\left(\prod_{k=1}^m \xi_k^{-r}\right)}{\prod_{j=1}^n\prod_{k=1}^m(1-z_j/\xi_k)^g} = 
\sum_{\nu\in\PP_m} \frac1{b_\nu(g)}P^{(1/g)}_{\nu+(r^n),n}(z)P^{(1/g)}_{\nu+(r^m),m}(1/\xi)\quad (m\leq n,\; r\in\Z)
\end{equation*} 
(the interpretation of this for $r<0$ is explained in Remark~\ref{rem:Jack}). 
For given Jack polynomial $P^{(1/g)}_{\mu,m}(\xi)$, we fix $\rho>0$ and assume $|z_j|<\rho$ for all $j$ to compute 
\begin{multline*} 
\left( \prod_{k=1}^m \oint_{|\xi_k|=\rho}\frac{d\xi_k}{2\pi\ii\xi_k}\right) \left(\prod_{j\neq k}^m(1-\xi_j/\xi_k) \right)^g
\frac{\left(\prod_{j=1}^n z_j^r\right)\left(\prod_{k=1}^m \xi_k^{-r}\right)}{\prod_{j=1}^n\prod_{k=1}^m(1-z_j/\xi_k)^g}\\
=  \sum_{\nu\in\PP_m} \frac1{b_\nu(g)}P^{(1/g)}_{\nu+(r^n),n}(z) \left( \prod_{k=1}^m \oint_{|\xi_k|=\rho}\frac{d\xi_k}{2\pi\ii\xi_k}\right)
\left(\prod_{j\neq k}^m(1-\xi_j/\xi_k) \right)^g P^{(1/g)}_{\mu,m}(\xi)
P^{(1/g)}_{\nu+(r^m),m}(1/\xi) \\
=  \sum_{\nu\in\PP_m} \frac1{b_\nu(g)}P^{(1/g)}_{\nu+(r^n),n}(z)m! \left\langle P^{(1/g)}_{\mu,m},P^{(1/g)}_{\nu+(r^m),m}\right\rangle'_{m}
\\=  \frac{m!\, \cN_{\mu,m}(g) }{b_{\mu-(r^m)}(g)}P^{(1/g)}_{\mu+(r^n)-(r^m),n}(z).
\end{multline*} 
In this computation we interchanged summation and integration (first equality), 
used \eqref{product1} and that the Jack polynomials have real coefficients (second equality), 
and inserted the orthogonality relations in \eqref{orth}       (third equality). 

It is important to note that, if $r>\mu_m$, then $ \left\langle P^{(1/g)}_{\mu,m},P^{(1/g)}_{\nu+(r^m),m}\right\rangle'_{m}=0$ for all $\nu\in\PP_m$, 
and thus we have to set $P^{(1/g)}_{\mu+(r^n)-(r^m),n}(z)\equiv 0$ in such a case. 
Note that 
\begin{equation*} 
\lambda\equiv \mu+(r^n)-(r^m) = (\mu_1,\ldots,\mu_m,r^{n-m}), 
\end{equation*} 
so this latter case exactly corresponds to $\lambda=(\lambda_1,\ldots,\lambda_n)$ {\em not} satisfying $\lambda_1\geq \cdots\geq \lambda_n$. 
Thus the result of this computation is an integral transform mapping a Jack polynomial $P^{(1/g)}_{\mu,m}$ to a Jack polynomial $P^{(1/g)}_{\lambda,n}$ 
provided that $\lambda$ satisfies $\lambda_1\geq \cdots\geq \lambda_n$, and the integral is 0 otherwise. 
Note that our derivation shows that the integral is only well-defined if $|z_j|<\rho$ for all $j$. 

We now apply this result to the family of integer vectors $\lambda^{(a)}$ in \eqref{lambdaa}, specializing to $(n,m,r)=(N_{a+1},N_a,r_{a+1})$ and $\mu=\lambda^{(a)}$. 
Since $\lambda^{(a)} + (r_{a+1}^{N_{a+1}})-(r_{a+1}^{N_a})=\lambda^{(a+1)}$ we get  
\begin{multline} 
\label{III}
 \frac{N_{a}!\, \cN_{\lambda^{(a)},N_{a}}(g) }{b_{\lambda^{(a)}-(r_{a+1}^{N_{a}})}(g)}P^{(1/g)}_{\lambda^{(a+1)},N_{a+1}}(z) 
 = 
\left(\prod_{j=1}^{N_{a+1}}z_j^{r_{a+1}}\right) \left( \prod_{k=1}^m \oint_{|\xi_k|=\rho}\frac{d\xi_k}{2\pi\ii\xi_k} \xi_k^{-r_{a+1}}\right) \\ \times 
\left( \frac{\prod_{j\neq k}^{N_{a}}(1-\xi_j/\xi_k)^g}{\prod_{j=1}^{N_{a+1}}\prod_{k=1}^{N_a}(1-z_j/\xi_k)^g}\right)
P^{(1/g)}_{\lambda^{(a)},N_a}(\xi)
\end{multline} 
for $|z_j|<\rho$ ($j=1,\ldots,n$). 
The result in Proposition~\ref{thm:AMOS} can be obtained from this by iteration, starting out with $a=1$ and 
\begin{equation*} 
P^{(1/g)}_{\lambda^{(1)},N_1}(z)=(z_1\cdots z_{N_1})^{r_1}\quad (N_1=s_1,\; \lambda^{(1)}=(r_1^{s_1})), 
\end{equation*} 
inserting the result into \eqref{III} for $a=2$, etc., up to $a=L-1$.

To keep this appendix self-contained we conclude with a proof that the RHS in \eqref{product1} is independent of $\rho>0$ and equal to the scalar product defined in \eqref{product}.
 
\begin{proof}[Proof of \eqref{product1}] We denote the RHS in \eqref{product1} as $(f,g)_\rho$. Note that $(f,g)_1$ is the RHS of \eqref{product}.

To prove the result it is enough to show it for monomial symmetric polynomials, 
$f=m_{\lambda}$ and $g=m_{\mu}$ with arbitrary $\lambda,\mu\in \PP_n$. Use polar coordinates $z_j=\rho\ee^{\ii x_j}$ to show that 
\begin{equation*} 
(m_{\lambda},m_{\mu})_\rho = \rho^{|\lambda|-|\mu|}(m_{\lambda},m_{\mu})_1=\rho^{|\lambda|-|\mu|}\langle m_{\lambda},m_{\mu}\rangle_n'\quad (\rho>0) 
\end{equation*} 
with $|\lambda|=\sum_j\lambda_j$ and similarly for $\mu$. Since the Euler operator $D_n\equiv \sum_{j=1}^n z_j\frac{\partial}{\partial z_j}$ is self-adjoint 
with respect to the scalar product $\langle\cdot,\cdot\rangle_n'$, and $D_nm_{\lambda}=|\lambda|m_{\lambda}$, $\langle m_{\lambda},m_{\mu}\rangle_n'=0$ unless $|\lambda|-|\mu|=0$. 
This implies the result. 
\end{proof} 

\section{Details of the proof in Section~\ref{sec:Proof}}
\label{app:Details}

\subsection{Analyticity}
\label{app:Analyticity} 
We prove that the integral transform in \eqref{precise2} defines an analytic function in the specified domain. 

We note that the integral in  \eqref{precise2} has the form 
\begin{equation} 
\label{Ffromf}
F(z;p) \equiv \left( \prod_{j=1}^M\oint_{|\xi_j|=\rho}\frac{d\xi_j}{2\pi\ii\xi_j}\right)\left( \prod_{j\neq k}^M\theta(\xi_j/\xi_k;p) \right)^g f(z,\xi;p) 
\end{equation} 
with the function 
\begin{equation*} 
f(z,\xi;p) = \frac{\left( \prod_{j=1}^N z_j^r\right)\left( \prod_{j=1}^{M} \xi_k^{-r'}\right)}{\prod_{j=1}^N\prod_{k=1}^M\theta\left(z_j/\xi_k;p \right)^g}\cP_M(\xi;p).
\end{equation*} 
Moreover, $f(z,\xi;p)$ is analytic in the following domain $\subset\C^{N+M+1}$, 
\begin{equation} 
\label{Domain} 
|p|<|z_j/\xi_k|<1\;\text{ and }\; \rho_0|p|<|\xi_k|<\rho_0\quad (j=1,\ldots,N,\, k=1,\ldots,M),\quad |p|<1
\end{equation} 
with $\rho_0>0$ specified in Lemma~\ref{lem:1} and $ \rho_0|p|<\rho<\rho_0$. Indeed, $1/\theta(z_j/\xi_k)^g$ is analytic in the region $|p|<|z_j/\xi_k|<1$ for all $j,k$, 
and the analyticity of the other factors is guaranteed by the assumptions on $\cP_M(\xi;p)$ and $r,r'$ in Lemma~\ref{lem:1}. 
We now {\bf claim}: {\em For every function $f(z,\xi;p)$ that is analytic in the region \eqref{Domain}, the integral  in \eqref{Ffromf} defines a function $F(z;p)$ that is analytic in the region \eqref{zjregion}.} 
This implies the analyticity result for the integral transform in \eqref{precise2} stated in Lemma~\ref{lem:1}. 

We are left to prove this claim. Since 
\begin{equation*} 
\left( \prod_{j\neq k}^M \theta(\xi_j/\xi_k;p)\right)^g = \prod_{j<k}^M\prod_{m\in\Z}\left[\left(1-p^{|m|}\frac{\xi_j}{\xi_k} \right)\left(1-p^{|m|}\frac{\xi_k}{\xi_j} \right) \right]^g
\end{equation*} 
and $|\xi_j/\xi_k|=1$ on the integration contours, it is clear that, for fixed $(z,p)$ in the region \eqref{zjregion}, 
the integrand in \eqref{Ffromf} is bounded on the integration contours, and thus $F(z;p)$ is well-defined. 
Moreover, $F(z;p)$ can be differentiated with respect to each variable $z_j$ $(j=1,\ldots,N)$ and $p$ separately by interchanging differentiation with integration, 
which is justified by the Leibniz  integral rule. 
Thus, $F(z;p)$ is analytic in each variable separately as long as the conditions in \eqref{zjregion}  are fulfilled and, by Hartogs's theorem, this implies that $F(z;p)$ is analytic in the region \eqref{zjregion}.

\subsection{Boundary terms}
\label{app:BT} 
We give details on how to prove the identities in \eqref{BT11} and \eqref{BT22}. 

Recalling the definition in \eqref{PN} and using the Leibniz integration rule we write the LHS in \eqref{BT11} as  an integral of total derivatives, 
\begin{equation*} 
-\ii\left(\prod_{j=1}^M\int_{-\pi-\ii\epsilon}^{\pi-\ii\epsilon} \frac{dy_j}{2\pi}\right)  \sum_{k=1}^M \frac{\partial}{\partial y_k} K_{NM}(x,y)\ee^{-\ii Q'|y|}\psi_M(y). 
\end{equation*}
Inserting definitions we find that this is proportional to (we ignore irrelevant $y$-independent factors) 
\begin{equation*} 
 \left(\prod_{j=1}^M\int_{-\pi-\ii\epsilon}^{\pi-\ii\epsilon} \frac{dy_j}{2\pi}\right) \sum_{k=1}^M  \frac{\partial}{\partial y_k}F(y),\quad F(y)\equiv 
\frac{\Psi_M(y)^2 }{\prod_{j=1}^N\prod_{k=1}^M\theta(\ee^{-\ii y_k}z_j/\rho)^g}\ee^{-\ii r'|y|}\cP_M(\rho\ee^{\ii y})
\end{equation*} 
where $\rho\ee^{\ii y}$ is short for $(\rho\ee^{\ii y_1},\ldots,\rho\ee^{\ii y_M})$. 
All factors but $\Psi_M(y)^2$ in the function $F(y)$ are obviously $C^\infty$ and $2\pi$-periodic in each variable $y_j$ for $y\in[-\pi,\pi]^M$ 
(recall our analyticity assumptions on $\cP_M$, and that $r'\in\Z$ and $|p|<|z_j|/\rho<1$). 
The only potentially problematic factor that  could spoil that the total derivatives of $F(y)$ integrate to 0 is $\Psi_M(y)^2$ for non-integer $g$. 

Using definitions we can write 
\begin{equation*} 
\Psi_M(y)^2 = \left( \prod_{j<k}^M\sin^2\left(\frac{y_j-y_k}{2}\right)\right)^g \left(\prod_{j<k}^N\prod_{m=1}^\infty(1-2p^{m}\cos(y_j-y_k)+p^{2m}) \right)^{2g}, 
\end{equation*} 
which makes manifest that, actually, the only potentially problematic factor is 
\begin{equation} 
\Psi^{(0)}_M(y)^2 \equiv  \left( \prod_{j<k}^M\sin^2\left(\frac{y_j-y_k}{2}\right)\right)^g 
\end{equation} 
 (recall that $|p|<1$). 
 We now show that $\Psi^{(0)}_M(y)^2$ is $C^1$ and $2\pi$-periodic in each variable $y_j$ for $y\in[-\pi,\pi]^M$. 
 This implies the same properties for the function $F(y)$, which proves \eqref{BT11}. 

The $2\pi$-periodicity of $\Psi^{(0)}_M(y)^2$ is obvious. 
To prove the $C^1$-property we compute the partial derivative $\frac{\partial}{\partial y_j}\Psi^{(0)}_M(y)^2$ and find that it is a finite sum of terms proportional to
\begin{equation*} 
\cot\left(\frac{y_j-y_k}{2}\right)\Psi^{(0)}_M(y)^2\quad (k\neq j). 
\end{equation*} 
This behaves like $(y_j-y_k)^{-1+2g}$ as $y_j\to y_k$, and it is obviously is continuous everywhere else. 
Since we assume $g>1/2$, this proves that $\Psi^{(0)}_M(y)^2$ is $C^1$. 
This concludes the proof of \eqref{BT11}.

The proof of \eqref{BT22} is similar to the proof of \eqref{BT11},  and we are therefore sketchy. 
Inserting definitions we find that the LHS is an integral of total derivatives, 
\begin{multline*} 
\label{BT222} 
\left(\prod_{j=1}^M\int_{-\pi-\ii\epsilon}^{\pi-\ii\epsilon} \frac{dy_j}{2\pi}\right)  \sum_{k=1}^M \frac{\partial}{\partial y_k}\biggl(- \left\{\frac{\partial}{\partial y_k} K_{NM}(x,y)\right\} \ee^{-\ii Q'|y|}\psi_M(y) \\
+  K_{NM}(x,y) \frac{\partial}{\partial y_k} \ee^{-\ii Q'|y|}\psi_M(y) \biggr) . 
\end{multline*}  
Computing the terms one finds again that the only potentially problematic term comes from the factor $\Psi^{(0)}_M(y)^2$ in the integrand. 
As in the proof of  \eqref{BT11} above one checks that this factor, actually, is not problematic, either. 
The only difference to \eqref{BT11} is that we now have second order derivative terms $\frac{\partial^2}{\partial y_j^2}\Psi^{(0)}_M(y)^2$ which give rise to terms (we ignore less singular terms)
\begin{equation*} 
\frac{1}{\sin^2\left(\frac{y_j-y_k}{2}\right)}
\Psi^{(0)}_M(y)^2\quad (k\neq j), 
\end{equation*} 
which behaves like $(y_j-y_k)^{-2+2g}$ as $y_j\to y_k$ (continuity otherwise is again obvious). 
While these terms are singular for $1/2\leq g<1$ (they are regular for $g\geq 1$), the singularities are integrable for $g>1/2$, and this is enough to conclude that \eqref{BT22} holds true. 

\subsection{Eigenvalue computations}
\label{app:computation}
We give some details of computations that verify  \eqref{ppNa} and \eqref{ENa}.

We start with \eqref{ppNa}. We insert \eqref{pthm} for the cases $n=N_a$ and $n=N_{a-1}$  to compute 
\begin{equation*} 
d_{N_a} -d_{N_{a-1}}= \sum_{j=N_{a-1}+1}^{N_a}\lambda_j = k r_a
\end{equation*} 
using $\lambda_j=r_a$ for $N_{a-1}+1\leq j\leq N_a$ and $N_a=N_{a-1}+k$ {(recall that we assume that $\lambda$ is as in \eqref{lambdagen})}. 
From \eqref{iterate} it follows that this is equal to $N_a Q_a-N_{a-1}Q_a'$, which proves \eqref{ppNa}. 

To prove \eqref{ENa}  we use \eqref{Ethm} to compute, in a similar manner, 
\begin{multline*} 
E_{N_a} -E_{N_{a-1}}= 
\sum_{j=N_{a-1}+1}^{N_a}\left( r_a + \half g ( N_{a} +1 - 2 j) \right)^{2} + \sum_{j=1}^{N_{a-1}}\biggl( gk \left( \lambda_j + \half g(N_{a-1}+1-2j)\right)  \\
+\frac{1}{4}(gk)^2\biggr) +g^2\left[N_{a}(N_{a}-1) -N_{a-1}(N_{a-1}-1)\right]\left(\frac{\eta_1}{\pi}-\frac1{12} \right)  \\
= k r_a^2 - g k r_aN_{a-1} 
+ \frac1{12} k g^2\left( 3N_{a-1}^2 +3N_{a-1}k + k^2-1\right)\\  
+ gk d_{N_{a-1}}+ g^2k(2N_{a-1}+k-1)\left(\frac{\eta_1}{\pi}-\frac1{12} \right) , 
\end{multline*} 
recalling \eqref{pthm} in the last step. Using  \eqref{pthm},  \eqref{cNM} and \eqref{iterate} one can check that this equals 
$2(Q_a-Q'_a)d_{N_{a-1}} +N_{a-1}(Q'_a)^2-2N_{a-1}Q_aQ_a' +  N_aQ_a^2 + c_{N_aN_{a-1}}$, which proves \eqref{ENa}.

\end{document}